\newcommand{\commentout}[1]{}
\newcommand{\R}{\mathbb{R}}
\newcommand{\Z}{\mathbb{Z}}
\newcommand{\E}{\mathbb{E}}
\newcommand{\N}{\mathbb{N}}
\newcommand{\calT}{\mathcal{T}}
\newcommand{\calN}{\mathcal{N}}
\newcommand{\calK}{\mathcal{K}}
\newcommand{\calJ}{\mathcal{J}}
\newcommand{\calQ}{\mathcal{Q}}
\newcommand{\prompt}{\texttt{prompt}}
\newcommand{\OPT}{\textsc{OPT}}
\newcommand{\BR}{\mathit{BR}}
\newcommand{\bu}{\mathit{bu}}
\newcommand{\Vis}{\mathit{Vis}}
\newcommand{\Aud}{\mathit{Aud}}
\newcommand{\Act}{\mathit{Act}}
\newcommand{\tS}{\widetilde{S}}
\newcommand{\tA}{\widetilde{A}}
\DeclareMathOperator*{\argmax}{argmax}
\newtheorem{thm}{Theorem}
\newtheorem{theorem}[thm]{Theorem}
\newtheorem{lemma}[thm]{Lemma}
\newtheoremstyle{TheoremNum}%
    {\topsep}{\topsep}
    {\itshape}
    {}
    {\sc}
    {.}
    { }
    {\thmname{#1}\thmnote{ \sc #3}}
\theoremstyle{TheoremNum}
\newtheorem{thmdupl}{Theorem}
\title{Content Prompting: Modeling Content Provider Dynamics \\ to Improve User Welfare in Recommender Ecosystems}
\author {
    Siddharth Prasad\textsuperscript{\rm 1}\thanks{Work performed as a student researcher at Google Research.},
    Martin Mladenov\textsuperscript{\rm 2},
    Craig Boutilier\textsuperscript{\rm 2}
}
\begin{document}

\maketitle

\begin{abstract}
Users derive value from a recommender system (RS) only to the extent that it is able to surface content (or items) that meet their needs/preferences. While RSs often have a comprehensive view of user preferences across the entire user base, content providers, by contrast, generally have only a \emph{local} view of the preferences of users that have interacted with their content. This limits a provider's ability to offer \emph{new} content to best serve the broader population. In this work, we tackle this \emph{information asymmetry} with \emph{content prompting policies}. A \emph{content prompt} is a hint or suggestion to a provider to make available novel content for which the RS predicts \emph{unmet user demand}. A \emph{prompting policy} is a sequence of such prompts that is responsive to the dynamics of a provider's beliefs, skills and incentives. We aim to determine a \emph{joint} prompting policy that induces a set of providers to make content available that optimizes \emph{user social welfare in equilibrium}, while respecting the incentives of the providers themselves. Our contributions include: (i) an abstract model of the RS ecosystem, including content provider behaviors, that supports such prompting; (ii) the design and theoretical analysis of sequential prompting policies for individual providers; (iii) a mixed integer programming formulation for optimal joint prompting using path planning in content space; and (iv) simple, proof-of-concept experiments illustrating how such policies improve ecosystem health and user welfare. 
\end{abstract}

\section{Introduction}

Recommender systems (RSs) play a critical role in surfacing items (e.g., products, services or content) to their users, especially when the item corpus is vast. Of course, RSs not only create value for their users; they provide a significant service to  \emph{item providers} by helping them identify a market or audience for their offerings. Given the central role RSs play in facilitating the interaction between providers and users, we view the RS as lying at the heart of a complex \emph{ecosystem} comprising all users, all providers, and the RS itself. A healthy recommender ecosystem requires that the item corpus be updated constantly to reflect the ever-changing needs and preferences of its users.

Naturally, one might expect the RS to assist providers in developing \emph{new} content (or products, services, etc.) to meet changing user needs. However, this is rarely done---providers are generally left to their own devices to explore, design and test new offerings. As a result, the ecosystem is often in an \emph{economically inefficient state}, generating less-than-optimal social welfare for its users (and its providers).

Generally, while RSs model user preferences for items \emph{in the corpus}, these models often generalize \emph{out-of-corpus} to some extent, thus revealing user preferences for ``hypothetical'' items the RS has not yet seen. This can be understood as a comprehensive view of \emph{latent consumer demand}. Unfortunately, providers lack this global view, since they generally interact with only a small subset of users. Moreover, the RS has a holistic view of the corpus, and possibly the abilities of providers to source or create new items; this can be interpreted as a deep perspective on \emph{potential supply}, a degree of insight also not generally shared by providers. It is this \emph{information asymmetry} between the RS and the providers that induces economic friction, preventing providers from making optimal design, sourcing or creation decisions regarding new items they might offer to users through the RS.

In this work, we develop a stylized model for understanding this information asymmetry and propose techniques to minimize its impact on (user) social welfare. Specifically, we propose the use of \emph{provider prompting policies} by the RS. These policies suggest novel items a provider might offer, over time, in a way that accounts for: (i) a provider's incentives and beliefs w.r.t.\ audience/market and its skills; (ii) the dynamics of such audience and skill beliefs; and (iii) the potential clash of policy prompts to different providers. Under certain conditions, we show that our prompting policies lead to a socially optimal provider equilibrium in which providers are incentivized to make items available that maximize user welfare (i.e., sum of user utilities for recommended items).

The remainder of the paper is organized as follows. After a brief discussion of related work (Sec.~\ref{sec:related}), we provide a detailed problem formulation of the process dynamics that incorporates: user affinities and social welfare; content provider skills, beliefs, incentives and best responses; and RS matching and prompting policies (Sec.~\ref{sec:model}). In Sec.~\ref{sec:oneprovider}, we detail \emph{single-provider prompting policies} in which the RS suggests a sequence of content points to one provider, assuming others remain fixed. These policies \emph{incentivize} the provider to follow a content path that reaches a socially optimal equilibrium (w.r.t.\ user welfare) in polynomial time. We extend this model in Sec.~\ref{sec:joint} to allow for the coordinated, \emph{joint prompting} of all providers,
and develop a mixed integer programming (MIP) formulation to solve the induced \emph{multi-agent path planning problem}. 
In Sec.~\ref{sec:experiments}, we evaluate our procedures on random problem instances to show the significant improvement in social welfare that can be created by prompting providers. While somewhat stylized, our model provides important first steps to a more \emph{holistic, incentive-aware design of recommender ecosystems}.

\section{Related Work}
\label{sec:related}

RSs typically rely on some form of 
prediction of a user's interests or intent, based on past usage or ratings data
\citep{grouplens:cacm97,jacobson2016music,covington:recsys16}. We do not commit to a specific method for generating such models, but our approach can be understood in terms of latent-space user and item embeddings of the type generated by collaborative filtering (CF) methods (e.g., matrix factorization \cite{salakhutdinov-mnih:nips07} or neural CF \cite{he_etal:www17,beutel_etal:wsdm18}) to estimate user affinity to specific items/content.

Recently there has been a growing appreciation for the multiagent interactions between users, or user groups, and content providers in an RS~\cite{burke_multistakeholder:RecSysHandbook2021}. In this work, we focus on the incentives and behaviors of providers and assume user interests are fixed. As such, of some relevance is work that considers ``fairness of exposure'' to the items offered by different providers, and modifications of the RS policy to increase such fairness~\cite{SinghJoachims:kdd18,Biega,diaz:sigir22,heuss:sigir22}.

More directly connected is work that explores the behavior of providers in response to RS policies. Of special relevance is a recent line of work investigating the incentives that providers have to change the content they offer, and game-theoretic equilibria in such settings.
\citet{benporat_etal:nips18} develop a game-theoretic
model of RSs whose providers act strategically---by making available
or withholding content---to maximize their user engagement in an RS.
\citet{benporat_etal:aaai19} draw a direct
connection between facility
location (or Hotelling) games and RSs with strategic providers. 
Both \citet{hron_creator_incentives:arxiv22} and \citet{jagadeesan_supply:arxiv22} study multi-dimensional content models, showing under what conditions Nash equilibria exist, how providers ``position'' themselves in content space, and the effect this can have on (say) generalization vs.\ specialization of content production. While this work is similar to ours in its study of adaptive provider behavior, it differs by assuming that providers have full knowledge of supply and demand, something that is far from true in most RSs. By contrast, we focus on the \emph{information asymmetry between providers and the RS}, and the active role the RS can play to reduce it. Our approach allows self-interested providers to make more informed content decisions that induce equilibria that better serve both users and providers. We also adopt a more nuanced model of provider skill and beliefs.

We focus on how the RS can intervene to shape the desired equilibrium.
\citet{benporat_etal:aaai19} (see above) examine Nash equilibria in the 1-D case under different RS matching policies while~\citet{ben2020content} extend and generalize this analysis further. \citet{mladenov_etal:icml20} investigate (non-strategic) provider behaviors driven by the engagement they obtain, and RS policies that optimize long-term user welfare via matchings that anticipate the equilibria of the induced dynamical system. Neither model addresses information asymmetry or active provider intervention.

\section{Problem Formulation}
\label{sec:model}

We begin with a formulation that adopts a somewhat stylized model of a content RS, based on user and item embeddings. We also adopt a simplified model of content provider \emph{skills} and \emph{beliefs}, their dynamics, their decisions, and the RS knowledge of these elements. While simpler than a real RS ecosystem, this model contains the essential elements required to reason effectively about provider-prompting policies. Our multi-stage model of the RS ecosystem uses a single-stage model based on that of \citet{mladenov_etal:icml20}, though we develop a very different dynamics model. 

\subsection{Providers, Users and One-stage Recommendations}

We assume an RS that matches \emph{users} to the content made available by \emph{content providers}. 
Our process is organized into $T$ stages. We use $T$ for both finite and infinite-horizons, taking $\lim_{T\to\infty}$ in the latter case.)
At the beginning of each stage $t\leq T$, each provider determines the content they will make available. During the stage, as users arrive, the RS matches them to providers (and their content) using some fixed policy.

\vskip 2mm
\noindent
\textbf{Providers, Content Points and Skill}: We assume a finite set of \emph{providers} $\calK = \{1,\ldots, K\}$. At stage $t$, each provider determines the content it will make available for recommendation, from a finite set of \emph{content points} $\calJ = \{1,\ldots, J\} \subset \R^d$, where each point is an embedding of that content.\footnote{This can be generalized to a continuous set of points in some embedding space. However, our user utility and provider reward models ensure a provider only chooses from a finite set of points in equilibrium. This contrasts with approaches that use simpler provider rewards (e.g., those that simply count matched users \cite{jagadeesan_supply:arxiv22}, which can induce a continuum of equally good equilibria).}
Let $\ell^t_k \in \calJ$ be the content generated by provider $k$ at stage $t$, which we call $k$'s \emph{location} at $t$.  The \emph{location vector} $L^t = \langle \ell^t_1, \ldots, \ell^t_K\rangle$ reflects the content decisions of all providers. 

Some providers are more adept at producing certain types of content than others (e.g., due to specific talent, interests, facilities, etc.). Let $s_{k,j}\in [0,1]$, be $k$'s (true) \emph{skill} w.r.t.\ to point $j$, reflecting this aptitude. Let $S$ be the (true) \emph{provider skill matrix}, and $S_k$ (the $k$th row of $S$) the \emph{skill vector} for provider $k$. We treat skill as fixed. Skill will exhibit generalization across ``similar'' content points, but we make no such assumption here.

\vskip 2mm
\noindent
\textbf{Users, Affinity, Utility and Reward}: We assume a finite set $\calQ \subset \R^d$ of users (or queries). At stage $t$, a set of users/queries is drawn from (known or estimated) distribution
$P^t(\calQ)$.
When $|\calQ|$ is small, we interpret each $q\in\calQ$ as representative of a \emph{user type}. For ease of exposition, we take $P^t$ to be uniform.
Each user $i\in\calQ$ has an inherent \emph{affinity} for content $j\in\calJ$ given by a non-negative, bounded \emph{affinity function} $\sigma(i,j)$, e.g., dot product or cosine similarity
if we use some \emph{latent space} $X\subseteq\R^d$ to embed users and content,
as in matrix factorization or neural CF. Together with affinity, the skill of the provider dictates user utility:  
if $j$ is offered by provider $k$ and recommended to $i$, $i$'s \emph{utility} for $j$ is given by \emph{utility function} $f(\sigma(i,j), s_{k,j})$. 
This might be as simple as the product $\sigma(i,j)\cdot s_{k,j}$.
 
We assume a provider's \emph{reward} for that recommendation is equal to the user's utility. Increasingly, RSs and content providers focus on user satisfaction  beyond simple engagement metrics \cite{diaz_et_al:recsys18,goodrow_youtube_blog_2021} and long-term utility; we use the term ``utility,'' rather than engagement, to emphasize this.  We assume the RS aims to maximize total user utility. Equating user utility with provider reward also aligns RS, user, and provider incentives; but if providers optimize for different forms of engagement, that can be incorporated below. 

\vskip 2mm
\noindent
\textbf{Matching Policies and Value}: When user $q$ arrives, the RS recommends the content of some provider $k$. A \emph{(stochastic) matching} $\mu: \calQ \rightarrow \Delta(\calK)$ associates each $q\in \calQ$ with a distribution over providers.\footnote{We adopt this model for simplicity. Generally, users issue different queries, or request recommendations in different contexts, under which the best match differs, a detail handled by all modern RSs. Likewise, providers offer multiple content items, and RSs match to specific provider content. Our formulation applies \emph{mutatis mutandis} if we match \emph{queries}
to provider \emph{content} rather than users to providers.} 
We write $\mu(q,k)$ to denote the match probability.
The \emph{value} of $\mu$ given a location vector $L^t$ is its expected user utility: 
$$V(\mu, L^t) = \sum_{q\in\calQ} \sum_{k\in\calK}
   \mu(q,k) f(\sigma(q,\ell^t_k), s_{k,\ell^t_k}).$$
The \emph{natural matching} $\mu^\ast_{L^t}$ which maximizes this value would be optimal if $L^t$ were stationary, but given the dynamics below, we avoid the term ``optimal'' to describe this policy. Implementing $\mu^{\ast}_{L^t}$ requires complete knowledge of user and content embeddings and provider skills, which we assume the RS has.

\subsection{Provider Beliefs and Decisions, and RS Prompts}

We now describe the process by which the $T$ stages unfold, including how providers update their beliefs about their skills and audience, and how these beliefs influence their content decisions.

\vskip 1mm
\noindent
\textbf{Skill and Audience Beliefs}: Providers base their decisions about which content to make available 
at stage $t$ 
by estimating their own audience's utility. This requires estimating both their own skill and the (affinity-weighted) audience they expect to be generated for them by the RS for any
content point
$j\in\calJ$ they might offer. Each provider $k$ has a \emph{skill belief vector} $\widetilde{S}^t_k$, where entry $\widetilde{s}^t_{k,j}\in [0,1]$ denotes $k$'s \emph{estimated skill or skill belief} w.r.t.\ point $j$. Let $\widetilde{S}^t$ be the matrix of such skill vectors ($k$'s beliefs are over its own skill, not that of other providers $i\neq k$).

While true skill $S$ is fixed, estimates $\widetilde{S}^t$ change over time as, say, the provider gains experience with new content.
We assume a skill belief update function of the form $\widetilde{S}_k^{t+1} = \bu_S(L^t_k, \widetilde{S}_k^{t}, E^t_k)$, i.e., $k$'s skill belief at stage $t+1$ depends on its prior belief $\widetilde{S}_k^{t}$ and the utility $E^t_k$ garnered at its location $L^t_k = j$. For simplicity, we sometimes assume a simple deterministic (and non-generalizing) skill update $\bu_{S,D}$: if $\ell^t_k = j$, $k$'s belief for point $j$ collapses to its true skill $s_{k,j}$ for all $t'> t$, with no other $j'$ being updated at time $t+1$. This is easily generalized, though the optimization below requires (an estimate of) an update model.

Each provider $k$ also has an \emph{audience (affinity) belief} $\widetilde{a}^t_{k,j}\le Q$ that measures its estimate of the total audience affinity it expects to attain if it offers content at point $j$. Define vector $\widetilde{A}^t_k$ and matrix $\widetilde{A}^t$ in the obvious way. These estimates also vary with time.
As with skills, we assume an audience update function $\widetilde{A}_k^{t+1} = \bu_A(L^t_k, \widetilde{A}_k^{t}, E^t_k)$, and sometimes assume a simple deterministic model $\bu_{A,D}$: if $\ell^t_k = j$, $k$'s audience belief for point $j$ collapses to its realized audience $A_{k,j}$ at time $t+1$,
so $\widetilde{A}^{t+1}_{k,j} = A_{k,j}$. Unlike skill beliefs, these can change with each new experience at point $j$. Together with deterministic skill update, this means a provider can determine the (expected) total user affinity---assuming the \emph{number} of users is observable while affinity is estimated---to which is was matched from the utility signal under simple, say, linear utility models (e.g., if user utility is the product of provider skill and user affinity). This too is easily generalized.
%
One exception to this form of audience belief update is if the RS provides a \emph{prompt}, defined next.

\vskip 1mm
\noindent
\textbf{Prompts}: To encourage providers to generate content that increases user utility, the RS uses \emph{prompts}, that is, suggestions to providers to make content available at certain points. To incentivize such content, the RS can (temporarily) commit some amount of audience affinity to a provider. Formally, an RS \emph{prompt} $\nu^{t-1}_k = (j, E)$ of provider $k$ at time $t-1$  consists of: (i) a suggested content point/location $j = \ell^t_k$ for $k$ to produce at stage $t$; and (ii) a commitment to a (minimum) level of audience utility $E = E^t_k$ at stage $t$ if $k$ produces $j$.

The effect on $k$'s audience belief depends on their level of \emph{trust} in the RS. Let $\lambda^t_k \in [0,1]$ denote $k$'s current trust; then $\widetilde{A}^{t}_j  \leftarrow \bu_P(\widetilde{A}^t_j,E,\lambda^t)$, where $\bu_P$ is an update function that determines $k$'s \emph{prompted belief} used to make its next content decision (see below). This prompted belief is tentative, as it will be updated given the \emph{realized audience} at stage $t+1$ (via $\bu_A$). We consider two example updates. The first is \emph{incremental-trust belief update}, where:
(i) $\bu_{P,I}(\widetilde{A}^t_j,E,\lambda^t) = (1-\lambda^t)\cdot \widetilde{A}^t_j + \lambda^t\cdot E$; and (ii) the trust parameter $\lambda^{t+1} = \tau(\lambda^t, E, E^t_r)$ is updated given $k$'s realized utility using a \emph{trust update function} $\tau$.  The second is \emph{full-trust belief update}, where  $\bu_{P,F}(\widetilde{A}^t_j,E,\lambda^t) = E$.

\vskip 1mm
\noindent
\textbf{Process Dynamics}: RS dynamics evolve as follows. The state $s^{t} = \langle L^{t}, \widetilde{S}^{t}, \widetilde{A}^{t} \rangle$ at stage $t$ consists of: location vector $L^t$, skill belief matrix $\widetilde{S}^t$, and audience belief matrix $\widetilde{A}^t$. The location $\ell^t_k$ chosen by provider $k$ is consistent with their beliefs (see below). The RS generates a matching $\mu^t$, the realization of which induces value for users, and utility for providers. Based on their realized utility, each provider updates its (skill and audience) beliefs---thought of as a \emph{half state} $s^{t+\frac{1}{2}} = \langle L^{t}, \widetilde{S}^{t+\frac{1}{2}}, \widetilde{A}^{t+\frac{1}{2}} \rangle$ (locations do not yet change). The RS may then \emph{prompt} providers by providing information $\nu^t$ about the audience they will receive if they offer content at a specific point, which can induce a further update in their audience belief. Prompts influence only audience beliefs, not skill beliefs, so $\widetilde{S}^{t+1} = \widetilde{S}^{t+\frac{1}{2}}$. Given updated beliefs $\widetilde{S}_k^{t+1}$ and  $\widetilde{A}_k^{t+1}$, each $k$ selects their next location $\ell^{t+1}_k$, determining state $s^{t+1}$. The process, schematically represented below, then repeats
$$\textstyle \langle L^t, \widetilde{A}^t, \widetilde{S}^t\rangle\xrightarrow{\mu} \langle L^t, \widetilde{A}^{t+\frac{1}{2}}, \widetilde{S}^{t+\frac{1}{2}}\rangle \xrightarrow{\nu}\langle L^{t+1}, \widetilde{A}^{t+1}, \widetilde{S}^{t+1}\rangle$$

\noindent
\textbf{Provider Best Responses}:\hspace{1mm} To model provider choices of location at each stage,
we assume providers are \emph{myopic utility maximizers} w.r.t.\ their own beliefs, but are not strategic in the ``dynamic'' sense (that is, they do not reason about how their actions might impact the RS policy or the behavior of other providers). Thus, given beliefs $\widetilde{S}^t_{k}, \widetilde{A}^t_{k}$ at stage $t$, $k$ chooses the location $\ell^t_k$ corresponding to their \emph{best response}, i.e., the content point
$\ell^t_k = \BR(\widetilde{S}^t_{k}, \widetilde{A}^t_{k}) =  \arg\max_{j\in\calJ} \widetilde{s}^t_{k,j} \widetilde{a}^t_{k,j}$
for which it predicts the greatest utility.
We call state $s^t = \langle L^t, \widetilde{S}^t, \widetilde{A}^t \rangle$ \emph{rationalizable} if $\ell^t_k = \BR(\widetilde{S}^t_{k}, \widetilde{A}^t_{k})$ for all providers $k$. We assume that provider location choice satisfies rationalizability in what follows.

\vspace*{2mm}
\noindent
\textbf{Stable Matchings and States}: A matching is \emph{(myopically) stable} if---with no RS prompts---it gives no incentive for a location change. Let $s^t = \langle L^t, \widetilde{S}^t, \widetilde{A}^t \rangle$, let 
$$E_k^t = \sum_{i\in\calQ} \mu^t(i,k) f(\sigma(i,\ell^t_k), s_{k,\ell^t_k})$$
be $k$'s realized utility under $\mu^t$,
and let $$\widetilde{S}_k^{t+1} = \bu_S(L^t_k, \widetilde{S}_k^{t}, E^t_k)$$ be $k$'s updated skill belief.
Without prompts, $\widetilde{A}_k^{t+1} = \widetilde{A}_k^{t+\frac{1}{2}}$.
We say that $\mu^t$ is \emph{myopically stable} w.r.t.\ $s^t$ if, for all $k\in\calK$:
      $$\widetilde{a}^{t+1}_{k,\ell^t_k} \widetilde{s}^{t+1}_{k,\ell^t_k} \geq \widetilde{a}^{t+1}_{k,j} \widetilde{s}^{t+1}_{k,j}, \; \forall j \neq \ell^t_k,$$
i.e., $k$'s current location $\ell^t_k$ remains a best response after it experiences the utility induced by $\mu^t$.

While myopic stability is a desirable equilibrium property, it does not ensure stability/equilibrium of the dynamical system. A matching/state may be myopically stable simply due to the slowness of the process by which providers update their beliefs. A true equilibrium notion requires that a stable state persists indefinitely: we say state $s^t$ is \emph{(non-myopically) stable} if there is a sequence of matchings $\mu^t, \mu^{t+1}, \ldots$ such that for all providers $k$ and stages $t' > t$,
$$\widetilde{a}^{t'}_{k,\ell^t_k} \widetilde{s}^{t'}_{k,\ell^t_k} \geq \widetilde{a}^{t'}_{k,j} \widetilde{s}^{t'}_{k,j}, \; \forall j \neq \ell^t_k.$$
A matching $\mu$ is \emph{(non-myopically) stable} w.r.t.\ $s^t$ if the matching sequence $\mu^{t'} = \mu, \forall t' \geq t$ renders $s^t$ stable in the sense above. Once a desirable system state is reached, ideally it will be stable w.r.t.\ a \emph{single} matching.


\vspace*{2mm}
\noindent
\textbf{Overall Objective}: Our overall objective is the maximization of user social welfare: this may be $V(\mu, L^T)$ at stage $T$ in the finite-horizon case, or long-term expected average reward $\lim_{T\to\infty}\E[\frac{1}{T}\sum_{t=0}^{T-1} V(\mu^t, L^t)]$. 
If provider content decisions are fixed/cannot be influenced by the RS, this simply requires the application of the natural matching $\mu^\ast_{L^T}$ relative to $L^T$. At the other extreme, if the RS could persuade providers to move to arbitrary locations at will, the problem is akin to a facility location or $k$-medians problem \citep{whelan2015understanding} where users are clients whose (inverse) affinities reflect travel costs and provider skill captures service costs. 
The reality of course is different---providers are independent decision makers whose content decisions accord with their beliefs and incentives.

The aim of a \emph{prompting policy} is to break the information asymmetry described above to allow providers to better calibrate their audience and skill beliefs so that their decisions lead to a (close to) welfare-optimizing matching. Moreover, since providers make their own content decisions, we want the final matching to be in \emph{equilibrium}, that is, be stable given providers' incentives.
The RS also uses its \emph{matching policy} by (perhaps temporarily) creating specific audience/utility levels that incentivize providers to make suitable moves. Taken together,
an RS policy has two parts: a \emph{matching policy} that associates a matching $\mu^t$  with each state $s^t = \langle L^t, \widetilde{S}^t, \widetilde{A}^t \rangle$; and a \emph{prompting policy} that, for each half state $s^{t+\frac{1}{2}} = \langle L^t, \widetilde{S}^{t+\frac{1}{2}}, \widetilde{A}^{t+\frac{1}{2}} \rangle$, generates a prompt $\nu^t$.

\section{Single-Provider Prompting Policies}
\label{sec:oneprovider}

A cornerstone of our approach and analysis is the \emph{single-provider prompting policy}. Here, we assume that the content decisions of all providers except one, $k\in\calK$, are fixed, and construct a prompting policy that ensures $k$ moves to the location $j\in\calJ$ that maximizes user welfare given the fixed locations of the other providers. Specifically, assume a fixed location vector $L_{-k}$ that specifies $\ell_{k'}$ for all $k'\neq k$, where $\ell^t_{k'} = \ell_{k'}$ for all $t\leq T$;
and an initial state
$s^0 = \langle L^0, \widetilde{S}^0, \widetilde{A}^0 \rangle$.
Our goal is to design a joint matching/prompting policy that induces $k$ to move to a target $j^\ast_k \in \calJ$, inducing location vector $L^\prompt = L_{-k}\circ j_k^\ast$, that is optimal w.r.t.\ long-term average reward:
$$\lim_{T\to\infty}\E\left[\frac{1}{T}\sum_{t=0}^{T-1} V(\mu^\ast_{L^\prompt},L^\prompt)\right].$$
Furthermore, the policy should induce an \emph{equilibrium}, i.e., a stable state where $k$ offers $j^\ast$ without additional prompts. Proofs of all results in this section along with additional details of our formulation are provided in Appendix~\ref{sec:app_oneprovider}.

\subsection{Equilibrium with No Prompting}

We illustrate the value of prompting policies with a simple example to show the potential loss in total user utility/welfare
that accrues without prompting.
Assume initial state $s^0$, where $k$'s beliefs are $\widetilde{S}_k^0$ and $\widetilde{A}_k^0$, a matching policy $\mu_L$ that is fixed for any location vector $L$ (e.g., the natural matching), and let $E_{k,j}$ be $k$'s utility under $\mu_L$ if $\ell_k = j$.

Let $O\subseteq\calJ$ be $k$'s \emph{undominated overestimates}---these are content points $j$ for which $k$ initially overestimates expected utility, but for which no other point's \emph{true} and \emph{estimated} values exceed $j$'s true utility: 
\begin{multline*}
O = \{j : \widetilde{s}^0_{k,j}\widetilde{a}^0_{k,j}  > E_{k,j}, \nexists j' \\
   \quad\quad\text{ s.t. } ( \widetilde{s}^0_{k,j'}\widetilde{a}^0_{k,j'}  > E_{k,j} \text{ and } E_{k,j'} > E_{k,j} )\}. 
\end{multline*}
Under reasonable (e.g., monotonically converging) belief updates, and best response behavior, $k$ will eventually try all undominated, overestimated content points. For instance, under immediately collapsing skill beliefs, the system reaches equilibrium in exactly $|O|$ steps: for any undominated $j$, any other $j'$ whose estimated utility is greater than $j$'s estimate is tried before $j$, but $k$'s beliefs immediately collapse to give the true value of $j'$, which by assumption is less than that of $j$. Thus each undominated $j$ will be tried once. Critically, \emph{no dominated point} $j''$ will ever be tried, since at least one undominated point has a \emph{true} utility greater than $k$'s estimate for $j''$.

This lack of exploration can lead to an \emph{arbitrarily suboptimal} equilibrium. For instance, suppose there are only  two points such that $\widetilde{s}^0_{k,2}\widetilde{a}^0_{k,2}  >  E_{k,1} > E_{k,2} > \widetilde{s}^0_{k,1}\widetilde{a}^0_{k,1}$. Then $k$ only offers the suboptimal point $2$, whose true value is greater than $k$'s estimate for point $1$, despite the fact that $1$'s true value is greater than $2$'s. An RS prompt can reveal to $k$ the true utility at point $2$, thus incentivizing $k$ to offer $2$ (and realize its true value). This issue is further exacerbated by the fact that $k$ will never visit {\em any dominated point}, as detailed above.

The provider thus faces an exploration problem: indeed, natural schemes like ``optimism in the face of uncertainty'' would ensure $k$ tries points such as point 1. However, the cost of exploration (e.g., creating new content), the lack of direct control (e.g., predicting audience), and inherent risk aversion may prevent adequate \emph{self-exploration} by providers. Prompting reduces this risk by providing additional certainty, through promised utility, and incentivizing behavior that allows the provider to update their audience/skill beliefs where they might otherwise not. That said, we note that in some cases, provision of certain points may not be incentivizable by the RS, e.g., when $k$ underestimates it skill $s_{k,j}$ so drastically that no promised audience utility can incentivize $k$ to move to $j$, even with full trust.

\subsection{Prompting under Non-generalizing Belief Updates}

Let $\mu^t$ be a stable matching w.r.t\ the current state $s^t$, with $\ell^t_k = j$ and $E^k_j$ being $k$'s expected utility/engagement. The willingness of $k$ to move to the RS target $j^\ast$ depends on its trust in the RS prompts. We refer to Appendix~\ref{sec:app_oneprovider_nongeneralizing} for details (our theorems require specific assumptions on how provider trust is updated), but at a high-level, our prompting policies comprises two phases: (1) the RS takes steps to increase $k$'s level of trust in the RS by promising $k$ its expected audience (given $\mu^t$) at its \emph{current} point $\ell_k^t$---and continuing to match users using $\mu^t$---until $k$'s trust reaches a sufficient level; (2) the RS then prompts $k$ with $(j^\ast, \phi)$, where $\phi$ is the minimum level of audience affinity given $k$'s skill beliefs needed to induce the move; and matches using the optimal target policy $\mu^\ast_{L^\prompt}$. More precisely, our policy, parameterized by a trust threshold $\Lambda$ and a belief convergence rate $T(\varepsilon,\delta)$, is:
\begin{enumerate}
    \item While $\lambda_k < \Lambda$, repeat match-prompt pair $$(\mu,\nu) = \big(\mu^t, (\ell_k, \textstyle\sum_{q\in\calQ}\mu^t(q,k)\sigma(q,\ell^t_k))\big).$$
    \item Issue match-prompt pair $$(\mu^t, (j^*,\phi(j^*; \Lambda, \tS_k, \tA_k))).$$
    \begin{enumerate}
        \item If Step 2 has been done $ T(\varepsilon,\delta/2)$ times, terminate.
        \item Else if $\lambda_k\ge \Lambda$ go to Step 2.
        \item Else if $\lambda_k < \Lambda$ go to Step 1.
    \end{enumerate}
\end{enumerate}

If $\phi(j^*; \Lambda, \tS_k, \tA_k)\le\sum_{q\in\calQ}\mu^t(q,k)\cdot\sigma(q, j^*)$ then $\mu^t$ itself (stochastically) delivers the promised audience. If not, the provider trust might take a hit, but is rebuilt in step (c). First, assume the special case where $\mu^t$ is deterministic and $k$'s beliefs collapse immediately. Then, we may run the above policy with $\Lambda = 1$ and ignore steps (a)-(c). Provider $k$'s trust updates are governed by a learning rate $\eta$.

\begin{theorem}\label{thm:reachstatedeterm}
Let $s^t = \langle L^t, \tA^t, \tS^t\rangle$ be rationalizable and $\mu^t$ be a non-myopically stable matching w.r.t.\ $s^t$. After $\zeta=1/\eta$ iterations of the above policy, the resulting state $s^{t+\zeta}$ remains rationalizable and $\mu^t$ non-myopically stable w.r.t.\ $s^{t+\zeta}$. Furthermore, it maximizes the minimum trust, $\min_{t}\lambda^t$, among all policies that reach $s^{t+\zeta}$.
\end{theorem}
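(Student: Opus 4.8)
The plan is to read the $\zeta=1/\eta$ iterations as exactly the trust‑building phase (Step~1): in this special case $\Lambda=1$, so the main loop runs until $\lambda_k$ saturates, and I will show the saturation takes $1/\eta$ steps while leaving the rest of the state untouched. The trust count itself is immediate: in each Step~1 iteration the RS re‑applies the \emph{same} deterministic matching $\mu^t$ and promises $k$ precisely $\sum_{q\in\calQ}\mu^t(q,k)\sigma(q,\ell^t_k)$, i.e.\ the audience $k$ is about to realize at $\ell^t_k$, so the promise is kept exactly at every step; with learning rate $\eta$ a kept promise adds $\eta$ to $\lambda_k$ (capped at $1$), so $\lambda_k$ increases monotonically and reaches $1$ after $\zeta=1/\eta$ steps (from $\lambda_k=0$). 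In particular $\lambda_k^{t'}$ is non‑decreasing on $t'\in\{t,\dots,t+\zeta\}$.

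Next I would show these prompts are \emph{inert} on the $\langle L,\widetilde S,\widetilde A\rangle$‑component of the state, so the trajectory under Step~1 is identical to the trajectory obtained by repeating $\mu^t$ with no prompts. With the deterministic non‑generalizing updates $\bu_{S,D},\bu_{A,D}$, after one match at $\ell^t_k$ the half‑state audience belief there already equals the realized value $A_{k,\ell^t_k}$, which is exactly the promised number $E$; feeding this into the incremental‑trust update $\bu_{P,I}(\widetilde a^{t+\frac12}_{k,\ell^t_k},E,\lambda_k)=(1-\lambda_k)\widetilde a^{t+\frac12}_{k,\ell^t_k}+\lambda_k E$ with $E=A_{k,\ell^t_k}$ returns $A_{k,\ell^t_k}$ unchanged, and prompts never touch skill beliefs. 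Since $\mu^t$ is non‑myopically stable w.r.t.\ $s^t$, the constant sequence $\mu^{t'}\equiv\mu^t$ keeps $\ell^t_k$ a best response at every stage $t'>t$; combined with immediate collapse this makes the belief state stationary from stage $t+1$ on and prevents $k$ from deviating. Hence $s^{t+\zeta}$ differs from $s^t$ only in that the $\ell^t_k$‑entries of $\widetilde S,\widetilde A$ have collapsed to $s_{k,\ell^t_k},A_{k,\ell^t_k}$: rationalizability of $s^{t+\zeta}$ is then exactly the best‑response inequality supplied by non‑myopic stability, and since repeating $\mu^t$ from $t+\zeta$ onward leaves this state fixed, $\mu^t$ is non‑myopically stable w.r.t.\ $s^{t+\zeta}$.

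For the min‑trust optimality, observe that every policy under comparison inherits the same stage‑$t$ trust value $\lambda_k^t$ (it is not chosen), so any trajectory that passes through stage $t$ satisfies $\min_{t'}\lambda_k^{t'}\le\lambda_k^t$. By the first paragraph the Step~1 policy's trust trajectory is non‑decreasing on $\{t,\dots,t+\zeta\}$, so its minimum equals $\lambda_k^t$ and attains this bound; therefore it maximizes $\min_{t'}\lambda_k^{t'}$ among all policies reaching $s^{t+\zeta}$.

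The main obstacle is the inertness step: one must order the half‑state update (realized‑audience collapse) \emph{before} the prompt update and use the explicit forms of $\bu_{A,D},\bu_{S,D},\bu_{P,I}$ to conclude the Step~1 prompts cannot perturb $k$'s best response, so that stability is genuinely inherited from $s^t$ rather than merely plausible; a lesser point is pinning down the precise trust‑update law behind ``learning rate $\eta$'' so the step count is exactly $1/\eta$ rather than only $O(1/\eta)$.
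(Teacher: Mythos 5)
Your Phase~1 analysis (trust increases by exactly $\eta$ per step because the promise equals the realized audience, the prompts are inert on beliefs, hence trust saturates at $1$ in $1/\eta$ steps while the state only collapses at $\ell^t_k$) matches the first sentence of the paper's proof. But you have read the theorem as covering \emph{only} the trust-building loop, and that omits the substantive content. The paper's proof explicitly includes Phase~2: after trust reaches $1$, the RS issues the prompt $(j^*,\phi(j^*;1,\tS_k,\tA_k))$, the provider accepts it, its beliefs at $j^*$ collapse, and the \emph{resulting} state --- the one the theorem calls $s^{t+\zeta}$ --- has $\ell_k=j^*$ and is the new welfare-optimal equilibrium. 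Under your reading the provider never moves, so the conclusion you establish (the state is essentially $s^t$ with collapsed beliefs at the current location) is vacuous relative to the purpose of the policy, which is to induce the move to $j^*$. Compare the analogous Theorem~2, whose statement makes explicit that $s^{t+\zeta}$ has $\ell_k=j^*$.

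This misreading also breaks your min-trust argument. You conclude optimality from the claim that the trust trajectory is non-decreasing, so its minimum is the (uncontrollable) initial value $\lambda_k^t$. Once Phase~2 is included that claim fails: the promise $\phi(j^*;1,\tS_k,\tA_k)$ need not be delivered by the matching (the paper notes explicitly that if $\phi$ exceeds $\sum_q\mu^t(q,k)\sigma(q,j^*)$ ``the provider trust might take a hit''), so trust can drop below its pre-prompt value at exactly one step. The paper's argument is therefore different and genuinely needed: the Phase~2 prompt is the \emph{only} step at which trust can decrease, and by Lemma~\ref{lemma:minincentive} the required promise $\phi(j^*;\lambda,\cdot,\cdot)$ is decreasing in $\lambda$, so driving trust all the way to $1$ before prompting minimizes the size of the promise and hence the post-prompt drop, which is what maximizes $\min_t\lambda^t$ among policies reaching the same final state. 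You would need to add the Phase~2 step, verify acceptance of the prompt via the definition of $\phi$, argue that the collapsed state at $j^*$ is an equilibrium, and replace your monotonicity argument with this Lemma~\ref{lemma:minincentive}-based one.
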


More generally, we prove the complexity of the above policy under the assumption of a sample complexity measure $T(\varepsilon,\delta)$ that determines the number of times a content item must be visited for $k$'s predicted utility to be nearly perfect.

\begin{theorem}\label{thm:reachstategeneral}
Let $|\calQ|\ge \Omega(1/(1-\Lambda)^2)$, let $s^t = \langle L^t, \tA^t, \tS^t\rangle$ be rationalizable, and $\mu^t$ be non-myopically stable w.r.t.\ $s^t$. If we run the above policy with $\varepsilon < \frac{1}{2}\min_{j, j'}|\E[E_{k,j}] -\E[E_{k,j'}]|$, the following hold w.h.p.: (1) the resulting state $s^{t+\zeta}$ has $\ell_k = j^*$, and remains rationalizable, while $\mu$ remains non-myopically stable w.r.t.\ $s^{t+\zeta}$; and (2) the policy terminates in $\zeta=O(T(\varepsilon,\delta)^3\Lambda^2/(\eta^2\delta)^2)$ rounds.
\end{theorem}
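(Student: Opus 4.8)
The plan is to condition on a single high‑probability event and then reason deterministically, tracking the coupled evolution of the provider's trust $\lambda_k$ and its audience belief at $j^*$ through the two phases of the policy. Let $\mathcal{E}$ be the event that, in every round of the run, (a) the realized audience $\sum_{q}\1[q\mapsto k]\,\sigma(q,\cdot)$ lies within $O(1-\Lambda)$ of its mean $\sum_q\mu^t(q,k)\,\sigma(q,\cdot)$, and (b) after $m$ visits to $j^*$ the provider's running utility estimate there is within the tolerance guaranteed by the sample‑complexity measure $T(\cdot,\cdot)$. Part (a) is a Hoeffding/Bernstein bound over the independent matching indicators, and this is where $|\calQ|\ge\Omega(1/(1-\Lambda)^2)$ is used: it guarantees the realized audience stays inside the ``band'' around an honest promise, so a truthful Phase‑1 promise is (stochastically) kept. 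Part (b) is just the definition of $T(\varepsilon,\delta)$. A union bound over the rounds gives $\Pr[\mathcal{E}]\ge 1-\delta$; everything below holds deterministically on $\mathcal{E}$. (The union bound is over the eventual round count $\zeta$, so the present step and the counting below are mutually recursive and must be closed self‑consistently.)

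On $\mathcal{E}$ the two phases are well behaved. In each Phase‑1 round the RS promises the honest expected audience at the provider's current point and delivers it up to $O(1-\Lambda)$, so by the assumed monotone trust update $\tau$ with learning rate $\eta$, $\lambda_k$ rises by $\Omega(\eta)$; hence from any trust level at most $O(\Lambda/\eta)$ Phase‑1 rounds restore $\lambda_k\ge\Lambda$. A Phase‑2 prompt $(j^*,\phi)$ can lower $\lambda_k$ only while the provider still underestimates its audience at $j^*$ (so that the commitment $\phi$ exceeds the honest expected audience $\sum_q\mu^t(q,k)\sigma(q,j^*)$); since that gap shrinks monotonically as the belief $\widetilde a_{k,j^*}$ converges, there are at most $O(T(\varepsilon,\delta/2))$ such ``hits,'' each of magnitude bounded by the form of $\tau$. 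After a hit the policy returns to Step 1 but at the current location $j^*$ (Phase 1 re‑prompts the current point), so the provider accumulates audience samples at $j^*$ throughout.

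The policy terminates only through Step 2(a), i.e.\ after exactly $T(\varepsilon,\delta/2)$ Phase‑2 executions; on $\mathcal{E}$ the provider's utility estimate at $j^*$ is then within $\varepsilon$ of $\E[E_{k,j^*}]$. Combining $\varepsilon<\tfrac12\min_{j,j'}|\E[E_{k,j}]-\E[E_{k,j'}]|$ with the facts that (i) $\mu^t$ is non‑myopically stable w.r.t.\ the rationalizable $s^t$, so before the policy the provider's belief‑utility at its original point dominates that at every other point, and (ii) $j^*$ is a target the RS can incentivize to a stable configuration (the hypothesis implicit in choosing $j^*$), i.e.\ $\E[E_{k,j^*}]$ is at least that dominant value, the $\varepsilon$‑accurate estimate at $j^*$ still exceeds the provider's belief‑utility at every $j\neq j^*$ — these beliefs are unchanged (stale) for unvisited points and only non‑increasing for the original point. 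Hence $j^*=\BR(\widetilde S_k^{t+\zeta},\widetilde A_k^{t+\zeta})$, so $s^{t+\zeta}$ is rationalizable with $\ell_k=j^*$; and since maintaining this inequality needs no prompt and the belief at $j^*$ only converges further under a fixed $\mu^t$, the (unchanged) matching $\mu^t$ is non‑myopically stable w.r.t.\ $s^{t+\zeta}$. For the round count, the run consists of $T(\varepsilon,\delta/2)$ Phase‑2 rounds, each preceded by a Phase‑1 segment of length $O(\Lambda/\eta)$, plus the slack forced by closing the union bound above; carrying the constants through, and where needed converting an expected round count to a high‑probability one via Markov, yields $\zeta=O(T(\varepsilon,\delta)^3\Lambda^2/(\eta^2\delta)^2)$.

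The main obstacle is the interplay in the last step: trust and the belief at $j^*$ co‑evolve, each Phase‑2 prompt may knock $\lambda_k$ below $\Lambda$ and send the run back to Phase 1, and the concentration union bound is over a number of rounds that itself depends on how many such oscillations occur. Bounding the oscillations, and choosing $\Lambda$ and the remaining parameters so that the recursion between ``number of rounds'' and ``failure probability'' closes, is the technical heart of the argument and is what produces the (deliberately loose) $T^3/\delta^2$‑type dependence; verifying that the trust‑update assumptions used in the deterministic Theorem~\ref{thm:reachstatedeterm} suffice for all of this is the remaining routine work.
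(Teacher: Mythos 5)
Your high-level architecture (bound Phase-1 excursions, count Phase-2 prompts by the sample-complexity measure, then use the $\varepsilon$-accuracy of the belief at $j^*$ plus the utility-gap condition to conclude $j^*$ is the best response at termination) matches the paper's, and your final best-response step is essentially the paper's. But there is a genuine gap at the probabilistic core. You condition on an event $\mathcal{E}$ under which the realized audience concentrates within $O(1-\Lambda)$ of its mean in \emph{every} round and claim $\Pr[\mathcal{E}]\ge 1-\delta$ by a union bound. Under the stated hypothesis $|\calQ|\ge\Omega(1/(1-\Lambda)^2)$, the per-round Chernoff--Hoeffding bound gives only a \emph{constant} success probability (the paper computes $\ge 3/4$), not $1-\delta/\zeta$; a union bound over $\zeta$ rounds therefore yields failure probability $\Theta(\zeta)$, which is vacuous. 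You flag this circularity yourself (``mutually recursive and must be closed self-consistently'') but never close it, and it cannot be closed as stated without strengthening the assumption to something like $|\calQ|\ge\Omega(\log(\zeta/\delta)/(1-\Lambda)^2)$, which the theorem does not assume.

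The paper resolves exactly this difficulty differently: rather than forcing every round to be good, it treats the trust sequence as a biased walk (up by $\ge\eta/2$ w.p.\ $\ge 3/4$, down by $\le\eta$ w.p.\ $\le 1/4$), couples it to a simple symmetric random walk on $\Z$, and applies the classical hitting-time tail $\Pr(\tau\ge\kappa)\le O\bigl((\Lambda/\eta)/\sqrt{\kappa}\bigr)$. Taking $\kappa=\Theta(\Lambda^2T(\varepsilon,\delta/2)^2/(\eta^2\delta^2))$ and union-bounding over the at most $T(\varepsilon,\delta/2)$ Phase-1 excursions produces the $O(T^3\Lambda^2/(\eta\delta)^2)$ round count. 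This also exposes a second problem with your accounting: on your event $\mathcal{E}$ each Phase-1 segment would deterministically take only $O(\Lambda/\eta)$ rounds, giving a total of $O(T\Lambda/\eta)$ --- far smaller than the theorem's bound --- so the $T^3/\delta^2$ dependence you assert at the end is not derived from anything in your argument; it is precisely the price of the $1/\sqrt{\kappa}$ hitting-time tail that your proof never invokes.
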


As $\Lambda$ varies, our policy trades off (1) quickly reaching equilibrium and (2) keeping trust high. See Appendix~\ref{sec:app_oneprovider_nongeneralizing} for further details and discussion.

\subsection{Prompting when Beliefs Generalize across Content}

The policy above will succeed only if $k$'s skill belief at $j^\ast$ is such that the promised audience will induce it to move. The fact that provider and cumulative user utility are aligned ensure that $\mu^\ast_{L^\prompt}$ will, in fact, satisfy $k$ if it can be convinced to move. If $k$'s skill beliefs for any $j'$ are only updated when it generates content at $j'$, then this limits the set of reachable points, and if $j^\ast$ is reachable, it can be reached with a single prompt. By contrast, suppose $k$'s skill beliefs about $j'$ can be influenced by its experience at a \emph{different} $j''$; e.g., if producing successful snorkelling content increases $k$'s belief it can produce scuba content. Then the RS faces a \emph{path planning problem}: determining a sequence of points $j = j_0, j_1, \cdots j_{n-1}, j_n = j^\ast$ such that each $j_i$ is \emph{incentivizable} (i.e., can be prompted using the two-phase policy above) given $k$'s skill beliefs given that it has generated content at all $j_{\leq i}$. We assume only the following about generalizing belief updates: (1) beliefs at unvisited points never become less accurate, and (2) belief updates for unvisited points are independent of the order in which other points are visited. Full details are provided in Appendix~\ref{sec:app_oneprovider_generalizing}.

A trivial policy that eventually reaches the optimal incentivizable target $j^*$ prompts $k$ to visit any reachable point $j'$ the moment $j'$ is incentivizable. Of course, this may induce visits to unnecessary points. In Appendix~\ref{sec:app_oneprovider_generalizing}, we formulate a MIP to find the \emph{shortest promptable content path}. While this problem is NP-hard, fortunately, there is a simple, efficient greedy algorithm: at each step, if $j^*$ is incentivizable, prompt with $j^*$ and terminate; otherwise prompt with the (currently) incentivizable point $j$ that maximizes the increase in accuracy in $k$'s predicted reward at $j^*$ if $j$ is visited.

\begin{theorem}\label{thm:greedy}[Informal]
The greedy algorithm runs in polynomial time and returns a content path at most a constant factor longer than the shortest content path.
\end{theorem}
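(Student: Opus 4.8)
The plan is to recast the path-planning problem as a combinatorial covering-and-reachability problem, and then run a marginal-gain argument against the optimal path. By assumption (2), for any content point $p$ that $k$ has not yet visited, the accuracy of $k$'s predicted reward at $p$ is a well-defined function $g_p(V)$ of only the \emph{set} $V$ of points at which $k$ has produced content; by assumption (1), $g_p$ is monotone non-decreasing in $V$. Point $p$ is incentivizable at $V$ exactly when $g_p(V)$ clears a point-specific threshold $\beta_p$ — the skill-belief level at which some admissible audience commitment makes $p$ a best response — so monotonicity of $g_p$ makes incentivizability monotone: once incentivizable, always incentivizable. Two consequences follow: (i) the family of \emph{reachable} visited-sets (those realizable by a legal prompt sequence, i.e.\ one that only ever prompts a currently incentivizable point) is closed under union, hence has a unique maximal element and a well-defined closure of any set; and (ii) the length of the shortest promptable content path to $j^*$ equals $\mathrm{OPT}:=\min\{|V| : V \text{ reachable and } g_{j^*}(V)\ge\beta_{j^*}\}$. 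This cleanly separates the two sources of difficulty — the covering objective $g_{j^*}(V)\ge\beta_{j^*}$ and the reachability constraint.

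Polynomial running time is immediate from the description: each greedy step evaluates $g_p(\cdot)$ and incentivizability for the $O(|\calJ|)$ candidate points and takes a maximizer, and there are at most $|\calJ|$ steps. NP-hardness — so an exact polynomial algorithm is unlikely, justifying an approximation — follows from a polynomial reduction showing hardness (e.g.\ from Set Cover or Vertex Cover), encoding ground elements as the features whose accuracy $g_{j^*}$ aggregates and sets as content points; I would not belabor this.

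For the approximation bound I would track the potential $\Phi_t=\beta_{j^*}-g_{j^*}(V_t)$, where $V_t$ is greedy's visited set after $t$ steps, and fix an optimal reachable set $V^\circ$ with $|V^\circ|=\mathrm{OPT}$; greedy terminates exactly when $\Phi_t\le 0$. The argument has two ingredients. (a) A diminishing-returns (submodularity) property of $g_{j^*}$ — which I would derive from the concrete belief-update model — together with monotonicity gives $\sum_{p\in V^\circ\setminus V_t}\big(g_{j^*}(V_t\cup\{p\})-g_{j^*}(V_t)\big)\ge g_{j^*}(V_t\cup V^\circ)-g_{j^*}(V_t)\ge\Phi_t$, so some $p\in V^\circ$ has marginal gain at least $\Phi_t/\mathrm{OPT}$. (b) That point, or — if it is not yet incentivizable at $V_t$ — one of its prerequisites inside $V^\circ$, is available to greedy, because $V_t\cup V^\circ$ is reachable \emph{from} $V_t$: one can replay a legal ordering of $V^\circ$, skipping points already in $V_t$, and each point, when reached, has its $V^\circ$-prefix (hence incentivizing set) already visited. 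Hence each greedy step either terminates or shrinks $\Phi$ by a factor $(1-1/\mathrm{OPT})$, so after $c\cdot\mathrm{OPT}$ steps $\Phi\le e^{-c}\Phi_0$; since $\Phi$ is either $0$ or at least a fixed granularity $\delta_0$ determined by the belief model and independent of $|\calJ|$, greedy halts within $O\!\big(\log(1/\delta_0)\big)\cdot\mathrm{OPT}=O(\mathrm{OPT})$ steps, giving the claimed constant factor.

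The main obstacle is ingredient (b): reconciling greedy's ``maximize accuracy gain at $j^*$'' rule with the reachability constraint. When the high-marginal-value point of $V^\circ$ is not yet incentivizable at $V_t$, greedy is diverted onto a prerequisite that may contribute little directly to $g_{j^*}$, and these detour steps must be amortized against $\mathrm{OPT}$. I expect this to require a two-part potential that additionally credits progress toward \emph{unlocking} the points of $V^\circ$ (or a charging scheme that bills detour steps to $V^\circ$'s own legal ordering), arranged so detours cost only a constant factor rather than a factor growing with the length of prerequisite chains. This — together with the log-to-constant step — is precisely where the structural assumptions on belief updates (order-independence, monotone accuracy, and the bounded-granularity/submodular behavior of accuracy as a function of the visited set) do the real work; the remainder is the textbook greedy-coverage calculation.
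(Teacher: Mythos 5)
Your proposal takes a genuinely different route from the paper, and it contains a real gap at exactly the point you flag as the ``main obstacle.'' The paper's proof is a short charging argument that never invokes submodular coverage: at the first step where greedy deviates from the shortest path, let $j$ be the point the shortest path would take; since $j$ is incentivizable there and greedy maximizes $w_{\cdot,j^*}$ over incentivizable points, \emph{every} point $j'$ on the greedy detour satisfies $w_{j',j^*}\ge w_{j,j^*}$, so the accumulated accuracy at $j^*$ grows by at least $w_{j,j^*}$ per step and the detour ends (by reaching the threshold $r$ or rejoining) within $r/w_{j,j^*}$ steps. Charging each such detour to a distinct point of the optimal path gives $\textsc{Greedy}\le (r/w_{\min})\cdot\OPT$. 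Note that the ``constant'' here is the instance parameter $r/w_{\min}$, not a universal constant, and the analysis leans on the fact that the accuracy contributions $w_{j,j^*}$ are additive and order-independent --- so there is no separate ``unlocking'' bookkeeping to do: greedy never deliberately detours to a prerequisite; it simply takes the best currently incentivizable point, and that point's weight is automatically lower-bounded.

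The gap in your argument is ingredient (b) together with the log-to-constant step. In the classic $(1-1/\OPT)$ shrinkage argument you need, at every step, an \emph{available} (i.e.\ currently incentivizable) point with marginal gain at least $\Phi_t/\OPT$. You correctly observe that the high-marginal-gain point of $V^\circ$ may be locked, and that greedy is then forced onto a prerequisite whose direct contribution to $g_{j^*}$ may be arbitrarily small; but you do not supply the two-part potential or charging scheme that amortizes those detour steps, you only conjecture one exists. Without it, the recurrence $\Phi_{t+1}\le(1-1/\OPT)\Phi_t$ simply does not hold at locked steps, and the bound collapses. Separately, even where the shrinkage argument applies, it yields $O(\log(\Phi_0/\delta_0))\cdot\OPT$, and your conversion to $O(\OPT)$ requires a granularity parameter $\delta_0$ bounded away from zero independently of the instance --- an assumption the paper neither makes nor needs. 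If you want to salvage your framework, the fix is essentially to import the paper's observation: under the order-independent additive model, any point greedy selects while $j^*$ is not yet incentivizable already has $w_{\cdot,j^*}$ at least as large as some specific point of the optimal path, which converts the detour cost directly into a $r/w_{\min}$ factor without any submodularity or granularity hypothesis.
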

See Appendix~\ref{sec:app_oneprovider_generalizing} for a formal statement and proof.

\section{Joint Prompting Policies}
\label{sec:joint}
The problem of simultaneously prompting a set of providers, that is, ensuring that the \emph{joint} location (content) of all providers serves users effectively,  presents a non-trivial escalation in complexity. However, the main insight above---viewing provider evolution as \emph{paths through content space}---suggests treating audience availability for \emph{multiple providers} as the key requirement and challenge in designing \emph{joint prompting policies}.
This introduces an additional planning component to the problem, requiring a joint prompting policy to \emph{schedule prompts} in a way that ensures enough available audience to fulfil the promises needed to incentivize all providers to move to their optimal locations.

This unfortunately rules out the greedy application of the single provider policies above, which can be suboptimal. Consider a simple counterexample with: two providers $a, b$; two points $x,y$; current locations $\ell^t_a = x, \ell^t_b = y$; skills such that $s_{a,y} > s_{a,x}$, $s_{a,y} > s_{b,y}$,  $s_{b,x} > s_{b,y}$ and $s_{b,x} > s_{a,x}$; and user affinities such that the optimal matching $\mu^\ast$ has equal total affinity at points $a$ and $b$ whenever providers at those points have skills at least (resp.) $s_{a,x}, s_{b,y}$. In this situation, a Pareto improving move is to swap the locations of $a$ and $b$, since each has greater skill at the opposite location than at their current location and than their peer. However, any policy that moves one provider at a time must have both providers at the same location for at least one period,  imposing significant cost on user utility.

\commentout{
While 
we have thus far dealt with prompting a single provider to maximize user social welfare, our ultimate goal is ensuring that the \emph{joint} location (content) of all providers serves users effectively. We now turn to this question.

\subsection{Exploiting Single-Provider Prompting Policies}
\label{ref:single_provider}

In principle, single-provider prompting policies can be sequenced to achieve good joint states. Specifically, we can repeat the steps: (i) select a provider $k$ to prompt, (ii) compute an optimal target location $j^\ast_k$ for $k$ \emph{assuming} that other providers remain fixed, (iii) implement a prompting policy for $j^\ast_k$. This procedure iterates until equilibrium where no single provider movement improves social welfare. While prompting a single provider at a time may work in practice, it can be suboptimal. A simple counterexample involves a system with two providers who would be better off swapping locations (see Appendix~\ref{sec:app_mips}). Any policy that moves one provider at a time must have both providers at the same location for at least one period,  imposing significant cost on user utility. 
}

\subsection{Mixed Integer Programming Formulation}
\label{ref:sec_mips}

The complications above mean that joint prompting policies must plan provider paths under ``sufficient audience'' constraints. We develop a MIP formulation to solve this problem over a finite horizon $T$.\footnote{We briefly discuss alternative approaches below.} \commentout{In the first, we directly model possible locations of each provider at each stage. In the second, we implicitly consider all feasible \emph{location-vector paths}, and manage the combinatorial set of variables using {\em column generation (CG)} to avoid explicit enumeration.}

In our formulation, provider behavior is represented by decision variables $\Act_{k,j}^t\in\{0,1\}$ indicating whether provider $k$ offers content $j$ at stage $t$, with the requirement that each $k$ be active at exactly one point $j$ at each $t< T$.
Decisions by the RS are captured by three sets of optimization variables: the matching policy $\pi_{q, k}^t\in [0,1]$; the prompting policy $\nu_{k, j}^t\in\{0,1\}$ indicating whether the RS prompts $k$ to offer content $j$ at stage $t$; and the commitments, $C^t_{k, j}\in\R^+$, or audience promised for any such prompt.

With these variables, we can 
express relevant key quantities:
\begin{itemize}
    \item Provider utility: $$E_k^t = \sum_{q\in\calQ}\pi^t_{q,k}\sum_{j\in\calJ} \Act_{k,j}^t s^*_{k, j}\sigma(q, j);$$
    \item Provider audience: $$\Aud_k^t = \sum_{q\in\calQ}\pi^t_{q,k}\sum_{j\in\calJ} \Act_{k,j}^t\sigma(q,j);$$ 
    \item Induced skill beliefs: We assume skill belief collapses to true skill the moment $j$ is visited, hence:
    $$\widetilde{s}^t_{k,j} = \sum_{j\in\calJ}s^0_{k,j}(1-\Vis_{k,j}^{t-1}) + s^*_{k,j}\Vis_{k,j}^{t-1}$$ where $\Vis_{k,j}^t$ indicates whether $k$ has visited $j$ at or before stage $t$.
\end{itemize}
We outline only key MIP constraints that govern the RS dynamics here; the full MIP is detailed in Appendix~\ref{sec:app_mips}. Indices are universally quantified. We use $\lesseqgtr$ and $\pm$ to denote two constraints (in the obvious way).
$M$ in the ``big-$M$'' constraints is an upper bound on audience utility.
\begin{align}
    &C_{k, j}^t\le\nu_{k,j}^t M
            \label{eq:con1a} \tag{Cons.~1(a)} \\     \nonumber
    &\Aud_k^t\ge C_{k,j}^{t-1} \Act_{k,j}^t
            \label{eq:con1b} \tag{Cons.~1(b)}  \\     \nonumber
    &\widetilde{s}^t_{k,j} \widetilde{a}^t_{k,j}\ge \widetilde{s}^t_{k,j'} \widetilde{a}^t_{k,j'} - (1 - \Act_{k,j}^t)M
            \label{eq:con2} \tag{Cons.~2} \\     \nonumber
  &\widetilde{a}^t_{k,j}\lesseqgtr \Aud_k^{t-1} \pm (1-\Act_{k,j}^{t-1})M \pm \nu_{k,j}^{t-1}M
            \label{eq:con3} \tag{Cons.~3} \\      \nonumber
  &\widetilde{a}^t_{k,j}\lesseqgtr \widetilde{a}^{t-1}_{k,j} \pm \Act_{k,j}^{t-1}M \pm \nu_{k,j}^{t-1}M
            \label{eq:con4} \tag{Cons.~4} \\     \nonumber
  &\widetilde{a}^t_{k,j}\lesseqgtr C_{k,j}^{t-1} \pm (1-\nu_{k,j}^{t-1})M
            \label{eq:con5} \tag{Cons.~5}     \nonumber
\end{align}
\ref{eq:con1a} and~\ref{eq:con1b} are prompting constraints: the first limits the promised audience if a prompt is given, and ensures the promise is zero if no prompt is given; the second ensures that if $k$  ``accepted'' a prompt (i.e., moved to the prompted location), the matching delivers the promised audience.
\ref{eq:con2} ensures providers only offer content that are best responses.
The final constraints control audience belief updates: \ref{eq:con3}  when $k$ is at $j$ but was not prompted (belief updated to realized audience); \ref{eq:con4} at points that are inactive and unprompted (beliefs persist);
 and \ref{eq:con5} handles prompts (belief updated to promised audience).




Our objective is to maximize time-averaged user utility $\frac{1}{T}\sum_{t=0}^{T-1}\sum_{k\in\calK}E^t_k$.
The only quadratic terms in the MIP involve the product of a binary and a (non-negative) real-valued variable, which can be linearized in standard fashion.\footnote{The ``big-$M$'' trick for linearizing the product of a binary vairable $I$ and a (non-negative) real-valued variable $x$: replace the product $x\cdot I$ with a real variable $y$, and add constraints $y \leq I M$, $y\geq 0$, $y\leq x$ and $y\geq x - (1-I)M$.}

\section{Empirical Evaluation}
\label{sec:experiments}

We present some simple proof-of-concept experiments to demonstrate the value that can be generated by prompting policies. We generate random problem instances, using relatively small numbers of users and providers to illustrate key points, and compare the user welfare (or utility) generated over time of specific policies. While large-scale experiments and empirically determined models of behavior and incentives will be needed for practical deployment, these results suggest that prompting has an important role to play in improving RS ecosystem health. 

\begin{figure}
  \centering
  \includegraphics[width=\columnwidth, trim={3.4cm 1.2cm 3.5cm, 3.1cm},clip  ]{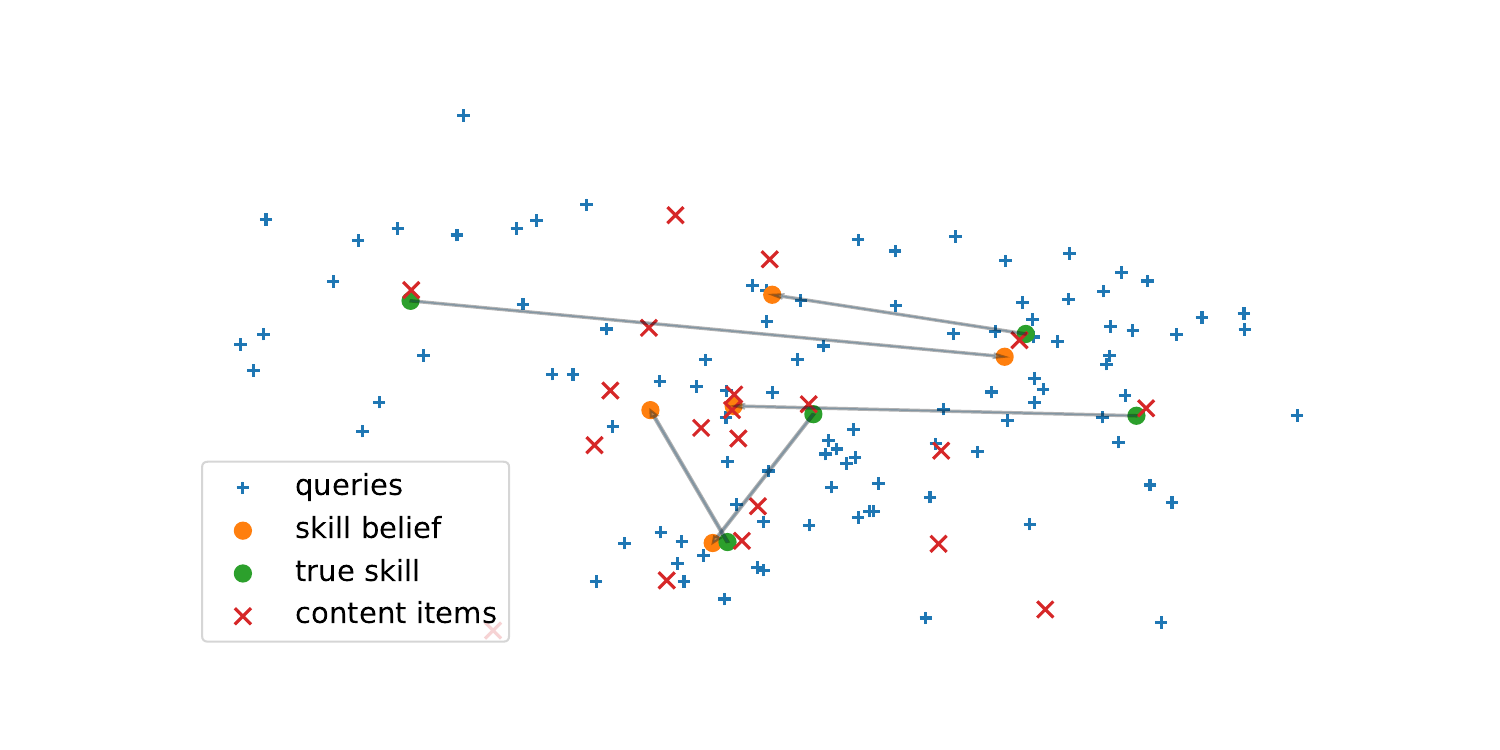}
  \caption{2-D problem instance illustration. Red crosses are content points (locations). Blue markers are user/query locations. Orange and green circles represent content providers: Green reflects a provider's true skills, while the connected orange circles capture's that providers skill beliefs.}
  \label{fig:illustrate}
\end{figure}

\vspace*{2mm}
\noindent
\textbf{Problem Generation}: 
Due to the infeasibility of testing our approach in a fully data-driven way, we generate synthetic (fully simulated) and semi-synthetic (preferences extracted from the MovieLens dataset~\cite{harper16:movielens}) scenarios. For each scenario, we generate multiple instances of problem size $(J, K, Q)$ with $J$ content points $\calJ$, $K$ providers $\calK$, and $Q$ users $\calQ$. Sizes vary across families of instance, but are kept small. 
In synthetic scenarios, instances are generated from a 
cluster model in 2-D space that emulates (implicit) communities of users with similar preferences. User content affinities decay linearly with Euclidean distance from a content point.
The semi-synthetic model adopts user and movie embeddings obtained by factorizing the MovieLens dataset. Movie embeddings are clustered with $k$-means to generate the possible content types. For each type $j$, the top $U_j$ users w.r.t.\ affinity are selected, where $U_j$ is determined based on the number of users with non-negative affinity to the cluster. (Affinities are inner products). Full data generation details 
are in Appendix~\ref{sec:app_exps}.  
See Fig.~\ref{fig:illustrate} for an illustration.

\vspace*{2mm}
\noindent
\textbf{Evaluation of the MIP}: We evaluate the MIP formulation on random small problem instances,
using various metrics, which we describe here:
\begin{itemize}
\item Let $E^T = \sum_{k\in\calK}E_k^T$ be the utility of the \emph{optimal prompting policy}, as determined by the MIP, at the final stage $T$, and $E = \frac{1}{T}\sum_{t=0}^{T-1}\sum_{k\in\calK}E_k^t$ be its \emph{time-averaged utility}---the latter is the MIP objective.
\item Let $\overline{E}^T$ and $\overline{E}$ denote the same quantities for the \emph{optimal policy that does not prompt providers}, but that does adapt its matching as providers update their locations (content).
\item Let $E_0$ be the utility of the \emph{optimal stationary policy} (no prompting, a fixed matching).
\item Define $P^T = (E^T - \overline{E}^T)/\overline{E}^T$ to be the \emph{final prompt gap}, i.e., the improvement obtained (at the final stage) by prompting, and $\widehat{P} = (E - \overline{E})/\overline{E}$ the time-averaged prompt gap, that is, the time-averaged improvement in utility obtained by prompting. 
\item Let $D = (\overline{E} - E_0)/E_0$ be the improvement obtained by the adaptive, non-prompting policy over the stationary policy.
\item Finally, let $$U_q = \frac{1}{T}\sum_{t=0}^{t-1}\sum_{k\in\calK}\pi^t_{q,k}\sum_{j\in\calJ}\Act^t_{k,j}s^\ast_{k,j}\sigma(q,j)$$ denote the {\em time-averaged user utility} of user $q$.
\end{itemize}

\begin{figure}[t]
    \centering
    \includegraphics[width=\linewidth, trim={0cm 0cm 0cm 1.15cm},clip]{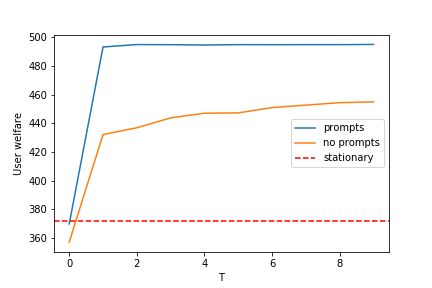}
    \caption{Total per-period utility $E^t$ (20 synthetic instances) for different policies ({\small $J = 20, K = 5, Q = 50, T = 10$}).}
    \label{fig:main_plot}
\end{figure}

\begin{figure}[t]
    \centering
    \includegraphics[width=\linewidth, trim={0cm 0cm 0cm 1cm},clip]{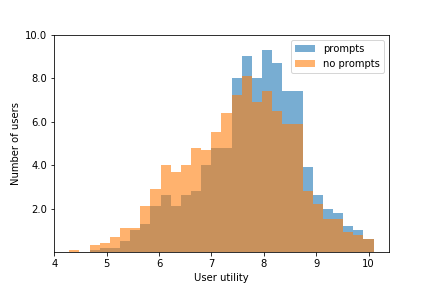}
    \caption{User utility $U_q$ counts (10 MovieLens instances) for different policies ({\small $J = 20, K = 10, Q = 100, T = 5$}).}
    \label{fig:main_histogram}
\end{figure}

We solve the MIP with Gurobi (the fastest commercial MIP solver).\footnote{\url{https://www.gurobi.com/}.} Fig.~\ref{fig:main_plot} plots per-period utility $E^t$ as a function of $t$ for each the described policies averaged over 20 random synthetic instances with $J=20$ content points, $K = 5$ providers, $Q = 50$ users over a time horizon of $T  = 10$, and Table~\ref{table:Q=50} contains key quantities for these instances in addition to a smaller set of instances with $J = 10$. The larger $(J = 20)$ instances exhibit larger final (9.3\%) and time-averaged (10.7\%) prompt gaps (roughly $2\times$) than the smaller $(J = 10)$ instances.

Table~\ref{table:movieworlds2} displays the same quantities averaged over 20 random MovieLens instances with $J\in\{10,20\}$, $K = 10$, $Q = 100$, and $T = 5$. Here, increasing the number of content locations seems to have a negligible impact on $P^T$ and $\hat{P}$, but the improvement over the stationary policy is significant (61.8\% for the $J = 20$ instances), showing the importance of dynamically matching and prompting based on the RS dynamics. The standard deviations of all quantities are fairly large, illustrating that for some instances the utility improvement due to prompting is not too large, but for other instances it is dramatic. Finally, as another means of visualization, Fig.~\ref{fig:main_histogram} displays the user utility histogram for the larger ($J=20$) MovieLens instances, showing a clear improvement due to prompting in both aggregate user utility and its distribution across users. The full set of experimental results is provided in Appendix~\ref{sec:app_exps}.

\begin{table}[t]
\small\addtolength{\tabcolsep}{-1pt}
\centering
\begin{tabular}{|c|c|c|c|c|c|c|}
\hline
$J$ &
  \begin{tabular}[c]{@{}l@{}}Avg $P^T$\end{tabular} &
  \begin{tabular}[c]{@{}l@{}}SD $P^T$ \end{tabular} &
  \begin{tabular}[c]{@{}l@{}}Avg $\widehat{P}$\end{tabular} &
  \begin{tabular}[c]{@{}l@{}}SD $\widehat{P}$\end{tabular} &
  \begin{tabular}[c]{@{}l@{}}Avg $D$\end{tabular} &
  SD $D$ \\ \hline\hline
10 &
  3.4\% &
  4.8\% &
  5.8\% &
  6.2\% &
  24.5\% &
  20.3\% \\ \hline
20 &
  9.3\% &
  7.2\% &
  10.7\% &
  7.5\% &
  18.2\% &
  9.0\% \\ \hline
\end{tabular}
\caption{Final gap $P^T$, time-avg.  gap $\widehat{P}$, stationary gap $D$ (20 synthetic instances), varying $J$ ({\small $K=5, Q=50, T = 10$}).}
\label{table:Q=50}
\end{table}

\begin{table}[t]
\small\addtolength{\tabcolsep}{-1pt}
\centering
\begin{tabular}{|c|c|c|c|c|c|c|}
\hline
$J$ &
  \begin{tabular}[c]{@{}l@{}}Avg $P^T$\end{tabular} &
  \begin{tabular}[c]{@{}l@{}}SD $P^T$ \end{tabular} &
  \begin{tabular}[c]{@{}l@{}}Avg $\widehat{P}$\end{tabular} &
  \begin{tabular}[c]{@{}l@{}}SD $\widehat{P}$\end{tabular} &
  \begin{tabular}[c]{@{}l@{}}Avg $D$\end{tabular} &
  SD $D$ \\ \hline\hline
10 & 3.3\% & 1.7\% & 3.3\%  & 1.7\% & 25.2\% & 14.2\% \\ \hline
20 & 3.7\% & 3.1\% & 3.5\% & 2.6\% & 61.8\% & 71.6\% \\ \hline
\end{tabular}
\caption{Final gap $P^T$, time-avg. gap $\widehat{P}$, stat.\ gap $D$ (10 MovieLens instances), varying $J$ ({\small $K\! =\! 10, Q\! =\! 100, T\! =\! 5$}).}
\label{table:movieworlds2}
\end{table}

We note that our MIP does not scale well beyond the instance sizes above. For example, on a MovieLens instance with $J = 50$, $K = 10$, $Q = 100$, $T = 5$, Gurobi was unable to find a feasible solution after 12 hours. Experimentally, the MIP scales poorly in the number of providers $K$ and the time horizon $T$. Improving our methods work at the scale of modern content ecosystems is a critical direction for future research. We discuss one such extension next.

\vspace*{2mm}
\noindent
\textbf{Column Generation}: In Appendix~\ref{sec:app_cg} we outline the initial ideas behind a \emph{column generation (CG)} approach that offers both a more general and more scalable solution to this complex \emph{multi-agent (i.e., multi-provider) planning problem}. The problem of ``moving'' providers through location space has strong analogies with multi-robot path planning~\citep{haghani_multirobot:arxiv21}, where the goal is determine the best deployment of a set of robots to specific tasks (e.g., in-warehouse order fulfillment), together with collision-free paths that allow them to complete their assigned tasks. While there are significant differences---providers are self-determining agents with their own goals and incentives---our CG formulation draws inspiration from the model of \citet{haghani_multirobot:arxiv21}.

\section{Concluding Remarks}
\label{sec:conclude}

We have developed a model of content provider incentives and behaviors that allows an RS to prompt providers to offer novel content items that help improve user social welfare and overall ecosystem health. Our prompting policies incentivize providers to ``explore'' w.r.t.\ their own skills and auidence beliefs, and \emph{de-risk} this exploration, nudging the RS to an equilibrium that improves user welfare and the utility of individual providers. Our prompting policies effectively break the fundamental information asymmetry that exists in many RS ecosystems. Apart from theoretical guarantees, our empirical results demonstrate that such prompting can significantly improve outcomes.

A number of important theoretical and practical extensions of this model are needed. Scalability is of critical importance---CG appears quite promising, but online adaptive policies using multi-agent RL should be investigated as well. Various extensions and generalizations of our model should prove valuable, including relaxing some of its more restrictive assumptions, such as:  the stationarity of user/query affinities and provider skills; the non-strategic decision making of providers (who best respond only myopically); and the extensive knowledge of the RS (e.g., of provider skills, user affinities). Of special interest is the case when the RS is uncertain of provider abilities (skill) and incentives (reward), which moves us into the realm of dynamic mechanism design \cite{bergemann:jel19}. A formal study of supply/demand information ``leakage'' embedded in prompts should generate useful insights as well. in Finally, the \emph{practical} prompting of providers requires translating the abstract notion of content points into \emph{actionable} prompts for providers (e.g., using generative modeling techniques to describe/suggest novel items), a topic of considerable importance.

\bibliography{long,standard,newrefs}

\appendix
\onecolumn
\section{Omitted Results and Proofs from Section~\ref{sec:oneprovider}}\
\label{sec:app_oneprovider}

\subsection{Prompting under Non-generalizing Belief Updates}\label{sec:app_oneprovider_nongeneralizing}

Given the fixed locations of all other providers, let $j^\ast\in\calJ$ be the optimal location for provider $k$, i.e., the point that maximizes long-term average reward.
Before addressing policy optimization in its full generality,
we first how the RS can move a (rationalizable) state $s^t = \langle L^t, \tA^t, \tS^t\rangle$ to $s^{t+1}=\langle L^{t+1}, \tA^{t+1}, \tS^{t+1}\rangle$, where for all providers $k' \neq k$, $\nu_{k'} = \texttt{None}$, and $k$'s location moves from $j$ to $j^*$
assuming specific audience and skill belief updates.

Let $\mu^t$ be a (non-myopically) stable matching w.r.t\ $s^t$,\footnote{Stability may be induced by earlier prompting (see next section), or by simple \emph{self-exploration} by providers (see discussion of exploration above).}
and let $M_{q, k}$ be the Bernoulli random variable denoting whether $q\in\calQ$ is matched to $k$, with $\Pr(M_{q, k}=1) = \mu^t(q,k)$.
We assume an utility function of the form $f(\sigma(q, j), s_{k, j}) = \sigma(q, j)\cdot s_{k, j}$.
Thus,
$k$'s audience and user utility under $\mu^t$ are the random variables (resp.):
$$A_{k,j} = \sum_{q\in\calQ}M_{q, k}\cdot\sigma(q, j);  \quad\quad
E_{k,j} = s_{k, j}\cdot A_{k,j} = \sum_{q\in\calQ}M_{q, k}\cdot \sigma(q, j)\cdot s_{k, j}.$$  

Provider $k$'s willingness to act based on an RS prompt $(j, C_{k,j})$ is governed by their trust $\lambda^t_k\in [0,1]$: $k$'s tentative audience belief is updated using incremental-trust belief update: $\widetilde{a}^{t+1}_{k,j} = (1 - \lambda^t_k)\cdot \widetilde{a}^{t + 1/2}_{k,j} + \lambda^t_k \cdot C_{k,j}$.
We define an \emph{accuracy-based} trust update as follows: 
$$\tau(\lambda_k, C_{k,j}, A_{k, j}) = \begin{cases} \max\{1,\lambda_k + \eta\cdot(1-|C_{k,j} - A_{k,j}|/|\calQ|)\} & \text{if } 1-|C_{k,j}-A_{k,j}|/|\calQ| > \lambda_k \\ \min\{0,\lambda_k - \eta\cdot(1-|C_{k,j} - A_{k,j}|/|\calQ|)\} & \text{if }1-|C_{k,j} - A_{k,j}|/|\calQ| < \lambda_k \\ \lambda_k & \text{if }1-|C_{k,j} - E_{k,j}|/|\calQ| = \lambda_k\end{cases}$$ where $\eta$ is a provider-specific learning rate. Informally, this trust update takes a step in the direction of the accuracy of the promised utility (with $1$ being perfect accuracy and $0$ the least accuracy).
Let
$$\phi(j; \lambda_k, \tS_k, \tA_k) = \max\left\{\frac{1}{\lambda_k}\left(\frac{\widetilde{s}_{k,\ell}\cdot\widetilde{a}_{k, \ell}}{\widetilde{s}_{k, j}} - (1-\lambda_k)\tA_{k,j}\right): \widetilde{s}_{k,\ell}\cdot\widetilde{a}_{k, \ell} > \widetilde{s}_{k,j}\cdot\widetilde{a}_{k,j}, \ell\in\calJ\right\}$$ 
denote the minimum level of utility the RS must guarantee to incentivize $k$ to move to $j$ given trust $\lambda_k$, skill beliefs $\widetilde{S}_k$, and audience beliefs $\tA_k$, given the form of incremental-trust tentative audience belief update.
Naturally, the minimum level of promised utility decreases as $\lambda_k$ increases:
\begin{lemma}
For any $j, \tS_k,\tA_k$, if $\lambda_k\ge\lambda_k'$ then $\phi(j; \lambda_k, \tS_k, \tA_k)\le\phi(j; \lambda_k', \tS_k, \tA_k)$.
\label{lemma:minincentive}
\end{lemma}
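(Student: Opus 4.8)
The plan is to exhibit $\phi(j;\cdot,\tS_k,\tA_k)$ as a pointwise maximum of functions each of which is non-increasing in the trust parameter, over an index set that itself does not depend on trust. First I would observe that the constraint ``$\widetilde{s}_{k,\ell}\widetilde{a}_{k,\ell} > \widetilde{s}_{k,j}\widetilde{a}_{k,j}$'' selecting the admissible $\ell\in\calJ$ in the definition of $\phi$ involves only the beliefs $\tS_k,\tA_k$ and the target $j$, not $\lambda_k$. Hence the index set $\calL_j := \{\ell\in\calJ : \widetilde{s}_{k,\ell}\widetilde{a}_{k,\ell} > \widetilde{s}_{k,j}\widetilde{a}_{k,j}\}$ is the same whether we use $\lambda_k$ or $\lambda_k'$, and it suffices to show that for each fixed $\ell\in\calL_j$ the map
\[
\lambda\;\longmapsto\;g_\ell(\lambda) \;:=\; \frac{1}{\lambda}\Bigl(\frac{\widetilde{s}_{k,\ell}\widetilde{a}_{k,\ell}}{\widetilde{s}_{k,j}} - (1-\lambda)\,\widetilde{a}_{k,j}\Bigr)
\]
is non-increasing on $(0,1]$ (here $\widetilde{s}_{k,j}>0$ is implicitly required for $\phi$ to be defined at all). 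A maximum of such functions preserves the monotonicity, giving the claim.

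The key step is a one-line rearrangement of $g_\ell$. Writing $b_\ell := \widetilde{s}_{k,\ell}\widetilde{a}_{k,\ell}/\widetilde{s}_{k,j}$ and $a := \widetilde{a}_{k,j}$, one has $g_\ell(\lambda) = (b_\ell - (1-\lambda)a)/\lambda = a + (b_\ell - a)/\lambda$, so the only $\lambda$-dependence lives in the term $(b_\ell-a)/\lambda$. The decisive observation is that $b_\ell - a > 0$ exactly because $\ell\in\calL_j$: dividing the membership inequality $\widetilde{s}_{k,\ell}\widetilde{a}_{k,\ell} > \widetilde{s}_{k,j}\widetilde{a}_{k,j}$ by $\widetilde{s}_{k,j}>0$ gives precisely $b_\ell > a$. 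Therefore $(b_\ell-a)/\lambda$ is a positive constant divided by $\lambda$, hence strictly decreasing on $(0,1]$, and so is $g_\ell$.

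To conclude, if $\lambda_k\ge\lambda_k'$ (both in $(0,1]$) then $g_\ell(\lambda_k)\le g_\ell(\lambda_k')$ for every $\ell\in\calL_j$, and taking the maximum over the common index set $\calL_j$ yields $\phi(j;\lambda_k,\tS_k,\tA_k)\le\phi(j;\lambda_k',\tS_k,\tA_k)$. The only edge case to dispatch is $\calL_j=\emptyset$, i.e.\ $j$ is already a best response under the current beliefs, in which case no prompt is needed and $\phi$ is taken to be $0$ (or $-\infty$), so the inequality holds trivially; one should also note the statement concerns $\lambda_k>0$, since at $\lambda_k=0$ a prompt exerts no influence and the minimum incentive level is undefined/infinite. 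I do not anticipate a genuine obstacle: the entire content is the rearrangement $g_\ell(\lambda)=a+(b_\ell-a)/\lambda$ together with the sign of $b_\ell-a$, and the only points requiring care are the domain restriction $\lambda\in(0,1]$ and the harmless empty-maximum case.
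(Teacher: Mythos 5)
Your proof is correct and follows essentially the same route as the paper's: both show that for each admissible $\ell$ the term inside the maximum is decreasing in $\lambda$ (the paper by computing the derivative $-(\widetilde{s}_{k,\ell}\widetilde{a}_{k,\ell}-\widetilde{s}_{k,j}\widetilde{a}_{k,j})/(\widetilde{s}_{k,j}\lambda^2)<0$, you by the equivalent rearrangement $g_\ell(\lambda)=a+(b_\ell-a)/\lambda$ with $b_\ell-a>0$), and both conclude by noting that the pointwise maximum over the $\lambda$-independent index set inherits the monotonicity. Your explicit handling of the empty-index-set and $\lambda=0$ edge cases is a small tidiness bonus but not a substantive difference.
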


\begin{proof}
For a fixed $\ell$ such that $\widetilde{s}_{k,\ell}\cdot\widetilde{a}_{k,\ell} > \widetilde{s}_{k,j}\cdot\widetilde{a}_{k,j}$, we have $$\frac{\partial}{\partial\lambda}\left(\frac{1}{\lambda}\left(\frac{\widetilde{s}_{k,\ell}\cdot\widetilde{a}_{k,\ell}}{\widetilde{s}_{k,j}} - (1-\lambda)\widetilde{a}_{k,j}\right)\right) = -\frac{\widetilde{s}_{k,\ell}\cdot\widetilde{a}_{k,\ell} - \widetilde{s}_{k,j}\cdot\widetilde{a}_{k,j}}{\widetilde{s}_{k,j}\lambda^2} < 0,$$ so $\frac{1}{\lambda}(\frac{\widetilde{s}_{k,\ell}\cdot\widetilde{a}_{k,\ell}}{\widetilde{s}_{k,j}} - (1-\lambda)\widetilde{a}_{k,j})$ is decreasing in $\lambda$. Therefore the maximum over all $\ell$ such that $\widetilde{s}_{k,\ell}\widetilde{a}_{k,\ell} > \widetilde{s}_{k,j}\widetilde{a}_{k,j}$ is also decreasing in $\lambda\in[0,1]$.
\end{proof}

The main challenge the RS needs to address is the fact that a the provider might require a high degree of trust for a prompt to induce a location change. Consider first the case where $\mu^t$ is deterministic and $k$'s beliefs collapse immediately.
The following simple scheme allows the RS to induce a move from $j$ to $j^*$ (if at all possible) while keeping $\mu^t$ unchanged until the move occurs.

\begin{enumerate}
    \item \emph{Phase 1 (build trust):} While $\lambda_k < 1$, repeat the match-prompt pair $$(\mu,\nu) = \big(\mu^t, (j, \textstyle\sum_{q\in\calQ}\mu^t(q,k)\cdot\sigma(q,j))\big).$$
    \item \emph{Phase 2 (prompt to goal):} Issue match-prompt pair $$(\mu^t, (j^*, \phi(j^*; 1, \tS_k, \tA_k))).$$
\end{enumerate}

If $\phi(j^*; 1, \tS_k, \tA_k)\le\sum_{q\in\calQ}\mu^t(q,k)\cdot\sigma(q, j^*)$ then $\mu^t$ itself (stochastically) delivers the promised audience. If not, the provider trust might take a hit, but can be rebuilt later. We expand on this further consequently. 

\begin{thmdupl}[\ref{thm:reachstatedeterm}]
Let $s^t = \langle L^t, \tA^t, \tS^t\rangle$ be rationalizable and $\mu^t$ be a non-myopically stable matching w.r.t.\ $s^t$. After $\zeta=1/\eta$ iterations of the above policy, the resulting state $s^{t+\zeta}$ remains rationalizable and $\mu^t$ non-myopically stable w.r.t.\ $s^{t+\zeta}$. Furthermore, it maximizes the minimum trust, $\min_{t}\lambda^t$, among all policies that reach $s^{t+\zeta}$.
\end{thmdupl}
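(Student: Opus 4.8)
The plan is to exploit the two strong simplifications of this regime — $\mu^t$ is deterministic and skill/audience beliefs collapse to their true values the instant a point is visited — to track the belief and trust trajectory essentially in closed form, then read off all three claims. In Phase 1 the provider sits at $j=\ell^t_k$ and, since $\mu^t$ is deterministic, the realized audience $A_{k,j}=\sum_q\mu^t(q,k)\sigma(q,j)$ equals the promised quantity exactly, so each Phase-1 prompt is perfectly accurate. Reading the clamps in $\tau$ as $\min\{1,\cdot\}$ and $\max\{0,\cdot\}$ respectively (its range $[0,1]$ forces this), a perfectly accurate prompt increments $\lambda_k$ by $\eta$ and never decreases it, so $\lambda_k$ reaches $1$ within $O(1/\eta)$ rounds and Phase 2 adds one more, accounting for $\zeta$. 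Meanwhile the collapsed beliefs at $j$ are $\widetilde s_{k,j}=s_{k,j}$ and $\widetilde a_{k,j}=A_{k,j}$ (the incremental-trust update with promise $A_{k,j}$ fixes the already-collapsed audience belief at $A_{k,j}$), while every $j'\neq j$ is neither visited nor prompted, so its belief stays at its initial value. Non-myopic stability of $\mu^t$ w.r.t.\ $s^t$, specialized to these dynamics, yields $E_{k,j}=s_{k,j}A_{k,j}\ge\widetilde s^0_{k,j'}\widetilde a^0_{k,j'}$ for all $j'\neq j$, and since all other providers are fixed and remain best responses by the same hypothesis, every Phase-1 state is rationalizable.

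For Phase 2 (assuming $j^\ast$ is not already a best response, else no prompt is needed), the single prompt $(j^\ast,\phi(j^\ast;1,\tS_k,\tA_k))$ at $\lambda_k=1$ sets $\widetilde a_{k,j^\ast}$ to $\phi(j^\ast;1,\tS_k,\tA_k)$; substituting $\lambda_k=1$ into the definition of $\phi$ shows $\widetilde s_{k,j^\ast}\,\phi(j^\ast;1,\tS_k,\tA_k)=\max\{\widetilde s_{k,\ell}\widetilde a_{k,\ell}:\widetilde s_{k,\ell}\widetilde a_{k,\ell}>\widetilde s_{k,j^\ast}\widetilde a_{k,j^\ast}\}$, which dominates $\widetilde s_{k,\ell}\widetilde a_{k,\ell}$ for every $\ell$; hence $j^\ast=\BR(\tS_k,\tA_k)$, $k$ moves to $j^\ast$, and $s^{t+\zeta}$ is rationalizable. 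For non-myopic stability of $\mu^t$ w.r.t.\ $s^{t+\zeta}$: the other providers are untouched; and the first time $k$ produces at $j^\ast$ under $\mu^t$ its beliefs there collapse to the true $E_{k,j^\ast}=s_{k,j^\ast}A_{k,j^\ast}$, which is $\ge E_{k,j}$ because $j^\ast$ is the utility-maximizing target and provider reward equals user welfare, so the move does not lower $k$'s true utility; chaining with $E_{k,j}\ge\widetilde s^0_{k,j'}\widetilde a^0_{k,j'}$ from Phase 1 shows $j^\ast$ is $k$'s best response at every subsequent stage.

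For the min-trust optimality: along our policy $\lambda_k$ is non-decreasing throughout Phase 1 and the lone Phase-2 prompt triggers no trust update before termination, so $\min_{t'\le t+\zeta}\lambda^{t'}=\min\{\lambda^t_k,\lambda^{t+\zeta}_k\}$ (simply $\lambda^t_k$ if that prompt's realization lies beyond the $\zeta$-round window). Any competing policy run from $s^t$ that ends at the same configuration shares both the initial trust $\lambda^t_k$ and the terminal trust, so its running minimum is at most $\min\{\lambda^t_k,\lambda^{t+\zeta}_k\}$; hence our policy attains the maximum.

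The part I expect to be the main obstacle is non-myopic stability of the terminal state: unlike the rest, which is essentially bookkeeping under the collapse assumption, it requires comparing the \emph{true} utility $E_{k,j^\ast}$ with $k$'s possibly inaccurate beliefs at all other points. The chaining argument above handles it, but it is sensitive to which matching ``long-term average reward'' refers to when defining $j^\ast$ (the natural matching $\mu^\ast_{L^\prompt}$ versus $\mu^t$), and to whether $s^{t+\zeta}$ denotes the state immediately after the Phase-2 prompt or after $k$ has produced once at $j^\ast$; fixing these conventions is exactly what makes $E_{k,j^\ast}\ge E_{k,j}$, and hence the whole claim, go through. A lesser subtlety is making precise what ``reaches $s^{t+\zeta}$'' means for a competitor — state only, or state together with trust — which pins down the exact form of the min-trust comparison.
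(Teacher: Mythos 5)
Your proof follows essentially the same route as the paper's (which is far terser): Phase~1 prompts are perfectly accurate under a deterministic $\mu^t$, so $\tau(\lambda,A_{k,j},A_{k,j})$ increments trust by $\eta$ each round until it hits $1$; the single Phase~2 prompt with $\phi(j^*;1,\tS_k,\tA_k)$ raises the predicted utility at $j^*$ to the maximum over all points by construction of $\phi$, inducing the move; and the extra bookkeeping you supply on rationalizability and on which beliefs collapse is correct and fills in steps the paper simply asserts. The one place you genuinely diverge is the min-trust claim. You argue that any competitor reaching $s^{t+\zeta}$ ``shares the terminal trust,'' but $\lambda$ is not a component of the state $\langle L,\tS,\tA\rangle$, so this is not given --- and it is in fact the thing to be proved. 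The paper's argument is different and avoids this: Phase~2 is the \emph{only} step that can decrease trust, its drop is governed by the inaccuracy of the promise $\phi$, and by Lemma~\ref{lemma:minincentive} the promise (hence the inaccuracy, hence the drop) is minimized by prompting at the highest possible trust, namely $1$; so the terminal trust of the proposed policy dominates that of any competitor, which pins down the running minimum. You should replace your ``shared terminal trust'' step with this monotonicity argument. Your closing worries are well placed: the inequality $E_{k,j^*}\ge E_{k,j}$ needed for non-myopic stability of the terminal state does depend on whether $j^*$ is optimal with respect to the audience delivered by $\mu^t$ or by $\mu^*_{L^{\prompt}}$, and the paper's own proof (``which collapses and induces equilibrium'') does not resolve this either, so you are not missing anything the authors supplied.
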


\begin{proof}
Trust is built up to $1$ in Phase 1 over the course of at most $1/\eta$ steps, since $\tau(\lambda_k^t, A_{k,j}, A_{k,j}) = \max\{1, \lambda_k^t + \eta\}$. In Phase 2 (a single step), $j^*$ is prompted, and accepted due to the promise of $\phi(j^*; 1, \tS_k, \tA_k)$, which collapses and induces equilibrium. No policy that induces this equilibrium can attain a higher value of $\min_t \lambda^t_k$. This is because the prompt in Phase 2 is the only step that can cause trust to decrease, and building up trust to $1$ in the previous phase ensures that the final trust value is as high as possible (by Lemma~\ref{lemma:minincentive}).
\end{proof}

We now relax the deterministic-matching and immmediate-belief-collapse assumptions. Instead, we assume a sample complexity measure $T(\varepsilon,\delta)\in\N$ that satisfies (letting $j\in\calJ$ and $\mu$ be s.t.\ $\mu^t = \mu$ for all $t$ where $\ell_k^t = j$):
$$|\{t : \ell_k^t = j\}|\ge T(\varepsilon,\delta)\implies \Pr\left(\forall t\ge\bar{t}\; |\widetilde{s}^t_{k,j}\cdot\widetilde{a}^t_{k,j} - s_{k,j}\cdot\E[A_{k,j}]|\le\varepsilon\right)\ge 1-\delta$$ where $\bar{t}$ is the latest stage where $\ell_k^{t} =j$. That is, if $k$ offers item $j$ 
enough,
its predicted utility (product of skill, audience beliefs) is nearly perfect (and remains so) with high probability. With only this assumption, the RS adopts the following policy to induce a move to $j^*$, parameterized by a trust threshold $\Lambda$ and an error threshold $\varepsilon$:

\begin{enumerate}
    \item \emph{Phase 1 (build trust):} While $\lambda_k < \Lambda$, repeat the match-prompt pair $$(\mu,\nu) = \left(\mu, \left(\ell_k, \textstyle\sum_{q\in\calQ}\mu(q,k)\cdot\sigma(q,\ell_k)\right)\right).$$ (Here $\ell_k$ is creator $k$'s location at the current time. We omit time in the superscripts for brevity.)
    \item \emph{Phase 2 (prompt to goal):} Issue match-prompt pair $$(\mu, (j^*,\phi(j^*; \Lambda, \tS_k, \tA_k))).$$
    \begin{enumerate}
        \item If Phase 2 has been entered $\ge T(\varepsilon, \delta/2)$ times, terminate.
        \item Else if $\lambda_k\ge \Lambda$ go to Phase 2.
        \item Else if $\lambda_k < \Lambda$ go to Phase 1.
    \end{enumerate}
\end{enumerate}

\begin{thmdupl}[\ref{thm:reachstategeneral}]
Assume $|\calQ|\ge 6\log 2/(1-\Lambda)^2$, let $s^t = \langle L^t, \tA^t, \tS^t\rangle$ be rationalizable, and $\mu^t$ be non-myopically stable w.r.t.\ $s^t$. If we run the above policy with $\varepsilon < \frac{1}{2}\min_{j, j'}|\E[E_{k,j}] -\E[E_{k,j'}]|$, the following hold with probability at least $1-\delta$: (1) the resulting state $s^{t+\zeta}$ has $\ell_k = j^*$, and remains rationalizable, while $\mu$ remains non-myopically stable w.r.t.\ $s^{t+\zeta}$; and (2) the policy terminates in $\zeta=O\big(\frac{T(\varepsilon,\delta)^3\Lambda^2}{\eta^2\delta^2}\big)$ rounds.
\end{thmdupl}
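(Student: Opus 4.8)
The plan is to prove the two claims separately, tracking the provider's trust $\lambda_k$ as a random process together with its current location, and relying on three ingredients: a Hoeffding bound relating the realized audience to its expectation, the monotonicity of the minimum incentive $\phi$ in the trust level (Lemma~\ref{lemma:minincentive}), and the sample-complexity hypothesis $T(\varepsilon,\delta)$ governing the collapse of $k$'s predicted utility at a repeatedly-visited point.

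First I would analyze the trust-building phase. While the policy is in Phase~1 the RS commits exactly the expected audience $C=\sum_{q\in\calQ}\mu(q,k)\sigma(q,\ell_k)=\E[A_{k,\ell_k}]$ at $k$'s current location and keeps matching with $\mu$, so $A_{k,\ell_k}=\sum_q M_{q,k}\sigma(q,\ell_k)$ is a sum of $|\calQ|$ independent bounded terms with mean $C$. Hoeffding gives $\Pr[\,|A_{k,\ell_k}-C|\ge(1-\Lambda)|\calQ|\,]\le 2\exp(-2(1-\Lambda)^2|\calQ|)$, which is at most a small absolute constant under the hypothesis $|\calQ|\ge 6\log 2/(1-\Lambda)^2$. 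On the complementary event the accuracy $1-|C-A_{k,\ell_k}|/|\calQ|$ exceeds $\Lambda\ge\lambda_k$, so the accuracy-based trust update raises $\lambda_k$ by at least $\eta\Lambda$; moreover an accurate commitment at a point that is already $k$'s best response does not alter its best response, so $k$ does not move during Phase~1. Hence $\lambda_k$ performs a bounded-increment random walk with positive drift $\Theta(\eta\Lambda)$, and a drift/hitting-time estimate shows it reaches $\Lambda$ within $\mathrm{poly}(\Lambda,1/\eta,1/\delta)$ steps, except with small failure probability.

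Next I would handle reaching and remaining at $j^*$. Each time Phase~2 is entered we have $\lambda_k\ge\Lambda$, so by Lemma~\ref{lemma:minincentive} the commitment $\phi(j^*;\Lambda,\widetilde{S}_k,\widetilde{A}_k)\ge\phi(j^*;\lambda_k,\widetilde{S}_k,\widetilde{A}_k)$, and by the definition of $\phi$ this makes $j^*$ a best response (breaking ties with an arbitrarily small surplus); re-issuing the (re-computed) prompt at every step of a Phase~2 run keeps $k$ at $j^*$ for the whole run. The policy terminates only after $T(\varepsilon,\delta/2)$ Phase~2 entries, each with $\ell_k=j^*$ under the same matching, so $j^*$ has been visited at least $T(\varepsilon,\delta/2)$ times; the sample-complexity hypothesis then yields, with probability $\ge 1-\delta/2$, that $|\widetilde{s}_{k,j^*}\widetilde{a}_{k,j^*}-s_{k,j^*}\E[A_{k,j^*}]|\le\varepsilon$ at every subsequent stage. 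Since $\varepsilon<\tfrac12\min_{j\ne j'}|\E[E_{k,j}]-\E[E_{k,j'}]|$, since $j^*$ maximizes the (expected) reward, and since under non-generalizing updates every frozen predicted utility $\widetilde{s}_{k,\ell}\widetilde{a}_{k,\ell}$ with $\ell\ne j^*$ is bounded above by the true reward at $j^*$ (using that $s^t$ is non-myopically stable and $j^*$ optimal), the collapsed belief certifies $j^*$ as an \emph{unprompted} best response; hence $s^{t+\zeta}$ is rationalizable with $\ell_k=j^*$, and holding the matching fixed forever keeps $j^*$ a best response, i.e.\ $\mu$ is non-myopically stable w.r.t.\ $s^{t+\zeta}$. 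A union bound over this event and the Phase~1 concentration events gives overall success probability $\ge 1-\delta$, proving part~(1).

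For part~(2), the round bound, I would count: the number of Phase~1/Phase~2 alternations is $O(T(\varepsilon,\delta/2))$, because each Phase~2 step yields one of the $T(\varepsilon,\delta/2)$ visits to $j^*$ needed to terminate and decreases $\lambda_k$ by at most $\eta$ before control returns to Phase~1; within each alternation a (deliberately crude) Markov/Chebyshev estimate on the Phase~1 hitting time, union-bounded over all alternations, contributes $\mathrm{poly}(\Lambda,1/\eta,1/\delta)$ rounds; summing over alternations yields $\zeta=O\big(T(\varepsilon,\delta)^3\Lambda^2/(\eta^2\delta^2)\big)$. The main obstacle — and the reason the dependence on $1/\delta$ is polynomial rather than logarithmic — is coupling the trust process with the belief-convergence process: one must show the phase oscillation actually terminates, that the visit count at $j^*$ grows under this oscillation, and that the weak, $\delta$-free concentration afforded by the sole hypothesis $|\calQ|\ge\Omega(1/(1-\Lambda)^2)$ still drives $\lambda_k$ up with high probability over the entire long horizon, which forces the crude union bounds responsible for the stated exponents.
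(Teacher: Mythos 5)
Your proposal follows essentially the same route as the paper's proof: a Chernoff--Hoeffding bound on the realized audience shows each Phase~1 prompt raises trust with constant probability, the trust process is treated as a random walk whose hitting time of $\Lambda$ is bounded only polynomially (the paper makes this precise via a coupling to a symmetric walk on $\Z$ and the classical $O(k/\sqrt{\kappa})$ tail), a union bound is taken over the $O(T(\varepsilon,\delta/2))$ phase alternations, and the sample-complexity hypothesis plus the $\varepsilon$-gap condition certify $j^*$ as an unprompted best response at termination. The only place the paper is more careful is the per-step increment: it lower-bounds the accuracy by $\max\{\lambda_k,1/2\}$ so the increment is at least $\eta/2$ and the coupled walk has nonnegative drift for \emph{all} $\Lambda$, whereas your $\eta\Lambda$ increment alone would not guarantee this for small $\Lambda$ --- but this is a refinement of the same argument, not a different one.
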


\begin{proof}
We show that with high probability, at most a quadratic number of time steps are spent in Phase 1 before the creator's trust exceeds the threshold $\Lambda$. Since the prompts in Phase 1 mirror the content item $\ell_k^t = BR(\tS^t_k, \tA^t_k)$ and since $\mu$ is non-myopically stable, the creator is guaranteed to accept each Phase 1 prompt. Furthermore, due to stability, a single content item $j$ achieves $BR(\tS^t_k, \tA^t_k)$ for all $t$ in the absence of prompts. For simplicity, we will assume that the creator's beliefs for item $j$ are already $\varepsilon$-accurate, that is, $|\tS_{k,j}\cdot\tA_{k,j} - S_{k,j}\cdot\E[A_{k,j}]|\le \varepsilon$. If this is not the case, we simply add a waiting phase at the very beginning of the policy that allows for this belief to reach the desired accuracy (which would only affect the constants in the convergence rate).

Let $\lambda_k^t < \Lambda$ be the creator's trust at the beginning of some execution of Phase 1 (if trust is larger than $\Lambda$, we would not be in Phase 1). Observe that \begin{align*}\frac{|A_{k,j} - \E[A_{k,j}]|}{|\calQ|} \le \frac{1-\lambda_k^t}{2} &\implies \frac{|A_{k,j} - \E[A_{k,j}]|}{|\calQ|} \le \min\left\{1-\lambda_k^t, \frac{1}{2}\right\} \\ &\implies 1 - \frac{|A_{k,j} - \E[A_{k,j}]|}{|\calQ|} \ge \max\left\{\lambda_k^t, \frac{1}{2}\right\} \\ &\implies \tau(\lambda_k, \E[A_{k,j}], A_{k,j}) \ge \lambda_k^t + \frac{\eta}{2}.\end{align*} Therefore $$\Pr\left(\tau(\lambda_k^t, \E[A_{k,j}], A_{k,j}) \ge \lambda_k^t + \frac{\eta}{2}\right) \ge \Pr\left(|A_{k,j} - \E[A_{k,j}]| \le \frac{(1-\lambda_k^t)|\calQ|}{2}\right)\ge 1- 2\exp\left(-\frac{(1-\lambda_k^t)^2|\calQ|}{2}\right)$$ by a Chernoff-Hoeffding bound. For $|\calQ|\ge\frac{6\log 2}{(1-\Lambda)^2}$ this probability is $\ge 3/4$. Therefore, every time $(j, \E[A_{k,j}])$ is given as a prompt to the provider in Phase 1, trust increases by at least $\eta/2$ with probability $\ge 3/4$, and trust decreases by at most $\eta$ with probability $\le 1/4$ (if trust is already at $0$, it increases by at least $\eta/2$ with probability $\ge 3/4$, and stays at $0$ with probability $\le 1/4$). We show that with high probability, trust $\lambda_k^t$ exceeds $\Lambda$ in quadratically many time steps. To prove this, we bound the hitting time of the trust process $(\lambda_k^t)$ by the hitting time of a simple random walk on $\Z$ via a standard coupling argument. For $s\in\N$ let $X_s$ be $+1$ with probability $1/2$ and $-1$ with probability $1/2$. Consider the process $(Y_t)$ defined as follows: every time $(X_s)$ walks one unit to the right, so does $(Y_t)$. If $(X_s)$ walks one unit to the left, we have two cases: (1) If $(X_s)$ walks one unit to the right in the subsequent time step, then $(Y_t)$ walks one unit to the right. (2) If $(X_s)$ walks one unit to the left in the subsequent time step, then $(Y_t)$ walks two units to the left. Then, $\Pr(Y_t = +1) = 3/4$ and $\Pr(Y_t = -2) = 1/4$, and by how we defined the coupling, $$\inf\left\{ t : \textstyle\sum_{t'\le t} Y_{t'}\ge k\right\}\le\inf\left\{s : \textstyle\sum_{s'\le s}X_{s'}\ge k\right\}$$ for any $k$. So, coupling the trust process $(\lambda_k^t)$ to $(Y_s)$ in the obvious way, we can bound the the hitting time $\tau_{\Lambda}$ of the trust process: $$\tau_{\Lambda} := \inf\{ t : \lambda_k^t \ge \Lambda\} = \inf\left\{t : \textstyle\sum_{t'\le t}Y_{t'}\ge \frac{2\Lambda}{\eta}\right\}\le\inf\left\{s: \textstyle\sum_{s'\le s} X_{s'}\ge \frac{2\Lambda}{\eta}\right\}.$$ Classical results~\citep{levin2017markov} yield $$\Pr(\tau_{\Lambda} \ge\kappa) \le \Pr\left(\inf\left\{t: \textstyle\sum_{t'\le t} X_{t'}\ge \frac{2\Lambda}{\eta}\right\}\ge \kappa\right)\le \frac{6\cdot(2\Lambda/\eta)}{\sqrt{\kappa}},$$ and so $$\Pr(\tau_{\Lambda}\le\kappa\text{ for every execution of Phase 1})\ge 1 - \frac{12\Lambda T(\varepsilon,\delta/2)}{\eta\sqrt{\kappa}}$$ by a union bound. Taking $\kappa\ge 576\Lambda^2T(\varepsilon,\delta/2)^2/(\eta^2\delta^2)$ yields $\Pr(\tau_{\Lambda}\le\kappa\text{ for every execution of Phase 1})\ge 1-\delta/2$. We enter Phase 1 at most $T(\varepsilon,\delta/2)$ times, so with probability $\ge 1-\delta/2$ the total number of time steps spent in Phase 1 is $O(T(\varepsilon,\delta/2)^3(\Lambda/\eta\delta)^2)$.

Finally, the prompting policy terminates once $j^*$ has been visited at least $T(\varepsilon,\delta/2)$ times, which means the creator's believed engagement for $j^*$ is $\varepsilon$-accurate throughout the execution of the policy with probability at least $1-\delta/2$. Therefore, since $\varepsilon < \frac{1}{2}(\E[E_{k, j^*}] - \E[E_{k, j}])$, $j^* = BR(\tS_k, \tA_k)$ at the end of the policy with probability at least $1-\delta/2$. This, combined with the above high-probability guarantee on the number of iterations spent in Phase 1, completes the proof. 
\end{proof}

By varying $\Lambda$, our policy interpolates between (1) reaching the optimal equilibrium as quickly as possible and (2) maintaining the greatest trust at each step. Indeed, $\Lambda_{*} = \min\{\lambda : \phi(j^*; \lambda, \tS^t_k, \tA^t_k)\le Q\}$ is the least trust threshold s.t.\ the RS successfully prompts $k$ to offer $j^*$ in Phase 2. The hitting time of $\Lambda$ in Phase 1 is increasing in $\Lambda$, so $\Lambda_{*}$ minimizes (in expectation) the number of steps needed to reach equilibrium. As $\Lambda\ge \Lambda_{*}$ increases, the post-prompt drop in provider trust in Phase 2 decreases, but the number of steps needed to reach equilibrium increases. 


\subsection{Prompting when Beliefs Generalize across Content}\label{sec:app_oneprovider_generalizing}

To abstract away the details of provider dynamics, we assume deterministic matchings, and that $k$'s skill and audience beliefs for any point $j$ collapse to their true values the first time $j$ is produced (this can be relaxed with only minor modifications to our argument). We now allow observations at $j$ to affect beliefs at other points as well. We assume only the following: 

\begin{enumerate}
    \item Beliefs at unvisited points never become less accurate, i.e., if $\widetilde{s}^t_{k,j} \ge s_{k, j}$ (resp. $\widetilde{s}^t_{k,j}\le s_{k,j}$) and some $j'\neq j$ is produced at time $t$, then $\widetilde{s}^{t}_{k,j} \ge \widetilde{s}^{t+1}_{k, j}\ge s_{k,j}$ (resp. $\widetilde{s}^t_{k,j}\le\widetilde{s}^{t+1}_{k,j}\le s_{k,j}$), and similarly for audience beliefs.
    \item Belief updates for unvisited points are independent of the order in which other points are visited; specifically for any $j, j'$, the accuracy increase in $k$'s predicted reward at $j'$ if $j$ is visited is independent of other visited points.
\end{enumerate}

For $\calJ'\subseteq\calJ$, let $$I(\calJ') = \{j\in\calJ : \phi(j; 1, \tS_k, \tA_k)\le Q\} \supset \calJ'$$ be the set of all incentivizable points given that $k$ has visited all and only points in $\calJ'$ ($I(\calJ')$ is well-defined and unique given our second assumption). 
We make the following additional assumption: for any unvisited $j, j'\in\calJ\setminus\calJ'$, visiting $j$ does not make $j'$  harder to incentivize, i.e., $$I(\calJ_1\cup\calJ_2)\supseteq I(\calJ_1)\cup I(\calJ_2)$$ for any $\calJ_1, \calJ_2\subseteq\calJ$. We can readily determine if a point $j^\ast$ is (or can be made to be) incentivizable. Let $I^m(\calJ') = I(I(\cdots I(\calJ')\cdots ))$ (closure under $m$ iterations) for any $m\leq J$. All points in this set are incentivizable with no more than $|I^m(\calJ')|$ prompts:

\begin{lemma}\label{lemma:incentivizable}
Given initial beliefs $\tS^0_k, \tA^0_k$ and a target content point $j^\ast$, determining if  $j^\ast \in I^J(\calJ')$, i.e., whether $j^\ast$ can be incentivized by prompting, can be done in polynomial time.
\end{lemma}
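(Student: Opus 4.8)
The plan is to show that the map $\calJ' \mapsto I(\calJ')$ is monotone (with respect to set inclusion) and that iterating it at most $J$ times reaches a fixed point, so that membership $j^\ast \in I^J(\calJ')$ can be tested by a straightforward fixed-point computation. First I would observe that each application of $I$ can be computed efficiently: for each point $j$, we need to check whether $\phi(j; 1, \tS_k, \tA_k) \le Q$, where the skill and audience beliefs $\tS_k, \tA_k$ are those obtained after visiting exactly the points in the current set. By the second structural assumption (order-independence of belief updates for unvisited points), these beliefs are well-defined functions of the visited set alone, so computing them and then evaluating the closed-form expression for $\phi$ from Section~\ref{sec:app_oneprovider_nongeneralizing} takes polynomial time; iterating over all $j \in \calJ$ gives $I(\calJ')$ in polynomial time.

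Next I would use the additional assumption $I(\calJ_1 \cup \calJ_2) \supseteq I(\calJ_1) \cup I(\calJ_2)$, together with $I(\calJ') \supseteq \calJ'$, to establish that the iterates form a non-decreasing chain
$$\calJ' \subseteq I(\calJ') \subseteq I^2(\calJ') \subseteq \cdots \subseteq I^m(\calJ') \subseteq \cdots \subseteq \calJ.$$
Indeed, $I$ is monotone: if $\calA \subseteq \calB$ then writing $\calB = \calA \cup \calB$ gives $I(\calB) \supseteq I(\calA) \cup I(\calB) \supseteq I(\calA)$; applying this inductively to the inclusion $\calJ' \subseteq I(\calJ')$ yields the chain. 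Since each strict containment in the chain increases the cardinality by at least one and the sets are bounded above by $\calJ$ (which has $J$ elements), the chain stabilizes after at most $J$ steps: $I^J(\calJ') = I^{J+1}(\calJ') = \cdots$, which is the least fixed point of $I$ containing $\calJ'$.

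Therefore the algorithm is: start with $\calJ^{(0)} = \calJ'$, repeatedly set $\calJ^{(m+1)} = I(\calJ^{(m)})$ until $\calJ^{(m+1)} = \calJ^{(m)}$ (which happens within $J$ iterations), and then check whether $j^\ast \in \calJ^{(m)}$. Each iteration is polynomial time and there are at most $J$ iterations, so the whole procedure is polynomial. Correctness follows because $I^J(\calJ')$ equals the stabilized set, so $j^\ast \in I^J(\calJ')$ if and only if $j^\ast$ lies in the computed fixed point. I expect the only subtle point — the ``main obstacle'' such as it is — to be verifying carefully that the closed-form $\phi$ and the induced beliefs after visiting a given set are genuinely computable in polynomial time under the stated assumptions (i.e., that nothing hidden in $\bu_S, \bu_A$ requires exponential work), and that monotonicity of $I$ really does follow from the superadditivity assumption as written rather than needing a separately postulated monotonicity axiom; both appear to go through cleanly given the assumptions already in place.
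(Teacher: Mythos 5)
Your proposal is correct and follows essentially the same route as the paper: the paper's proof is exactly the fixed-point iteration $\calJ_{t} = I(\calJ_{t-1})$ run until stabilization, with membership of $j^\ast$ checked at the fixed point. You simply supply the supporting details the paper leaves implicit (monotonicity of $I$ from the superadditivity assumption, stabilization within $J$ steps, and polynomial-time computability of each application of $I$), all of which go through as you describe.
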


\begin{proof}
Let $\calJ_0$ be the set of undominated overestimates that will be produced by the provider without any prompting by the RS. For each $t\ge 1$, compute $\calJ_{t} = I(\calJ_{t-1})$, and stop when $\calJ_{t} = \calJ_{t-1}$. If $j^*\in \calJ_t$, then there is a sequence of content items that leads to $j^*$ being incentivizable, otherwise no such path exists. 
\end{proof}

If $j^\ast$ is incentivizable, there must be a prompting policy that induces $k$ to visit $j^\ast$. In one naive such policy, the RS prompts $k$ to visit all points in $I(\calJ')$ in some order, i.e., along a specific \emph{content path}, then prompts visits to all points in $I(I(\calJ'))$, and so on, until $j^*$ belongs to the incentivizable set (at which point it is prompted). Of course, this may induce visits to unneeded points. The shortest promptable content path can be formulated as the following mixed integer program (MIP).

Let $m(j)$ be the minimal $m$ such that $j\in I^{m}(\calJ_0)$, where $\calJ_0$ is the initial set of content items (undominated overestimates) that $k$ visits without RS prompts. Consider the following weighted directed graph with vertex set $I(\calJ_0)\cup I^2(\calJ_0)\cup\cdots\subseteq\calK$: the edges in this graph are of the form $(j, j')$ for every $m(j)\le m(j')$, and the weight of edge $(j, j')$, denoted $w_{j, j'}$, is the increase in accuracy in the provider $k$'s believed reward for $j'$ when $j$ is visited. Finally, let $r_{j'}$ denote the minimum believed reward ($\widetilde{s}_{k, j'}\cdot\widetilde{a}_{k, j'}$) needed to incentivize $j'$ given that $\calJ_0$ is the only set of content that has been produced (this is the ``lowest bar" for incentivizability, and the assumption that $I(\calJ_1\cup \calJ_2)\supset I(\calJ_1)\cup I(\calJ_2)$ ensures that raising predicted belief above this threshold suffices to render $j'$ incentivizable). Given this, we can formulate the shortest feasible content path problem as a binary integer program. The variables of the IP are of the form $x_{j, j'}$ for every content points $j, j'\in\calK$ with either $m(j) = m(j')$ or $m(j)+1 = m(j')$ (any path that moves from a given level to a lower level can be replaced by a feasible path of equal length where the path traverses levels in nondecreasing order, due to the assumption that information gain on predicted rewards is order-independent). Let $j^*$ denote the target content point. The integer program is:

\[\arraycolsep=1.4pt\def\arraystretch{1.5}
\begin{array}{ll}\text{minimize }&  \sum_{j_1, j_2\in\calK} x_{j_1, j_2} \\
\,\text{subject to } 
&\sum_{j\in\calK} x_{j, j^*} = 1 \\
& \sum_{j_2\in N(j_1)}x_{j_1, j_2} + x_{j_2, j_1} = 2\hfill\forall j_1\in\calK\setminus\{j^\ast\} \\
&\sum_{m(j_1')\le m(j_2)} w_{j_1', j_2}\Big(\sum_{j_2'\in\calK}x_{j_1', j_2'}\Big) \ge r_{j_2}\cdot x_{j_1, j_2}\qquad \forall j_1, j_2\in\calK \\
& x_{j_1, j_2} \in \{0, 1\} \hfill \forall j_1, j_2\in\calK
\end{array}
\]

The first two constraints ensure that the solution defines a valid path through the content graph that ends at $j^*$. The third
third constraint ensures feasibility of the path with respect to incentivizability (if $j_2$ is visited after $j_1$, the sum of the accuracy gains on predicted belief at $j_2$ thus far must exceed the incentivizability threshold $j_2$).

While this MIP finds the shortest promptable path, the problem is NP-hard (by reduction from weighted-constrained shortest path). Fortunately, there is a simple, efficient greedy algorithm: at each step, if $j^\ast$ is incentivizable, prompt with $j^\ast$ and terminate; otherwise prompt with the (currently) incentivizable point $j$ that maximizes $w_{j,j^*}$, the increase in accuracy in $k$'s predicted reward at $j^\ast$ if $j$ is visited. Let $\textsc{Greedy}$ denote the length of the greedy path, and let $\OPT$ denote the length of the shortest path.

\begin{thmdupl}[{\ref{thm:greedy}}]
Let $r$ denote the minimum predicted reward threshold such that if $\widetilde{s}_{k, j^*}^0\cdot\widetilde{a}_{k, j^*}^0 \ge r$, then $j^*\in I(\emptyset)$ (that is, $j^*$ would be immediately incentivizable). Let $w_{min}$ be the minimum $w_{j,j^\ast}$ for any point $j$ in the shortest path. Then $\textsc{Greedy}\le \frac{r}{w_{min}}\cdot\OPT$.
\end{thmdupl}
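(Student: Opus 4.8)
The plan is to bound the length of the greedy path directly, by showing that every greedy prompt makes a fixed minimum amount of progress toward making $j^*$ incentivizable. Let $S_t=\{g_1,\dots,g_t\}$ be the distinct points visited by the first $t$ greedy prompts, and let $P^*=(p_1,\dots,p_{\OPT})$ be the points visited (before $j^*$) along some shortest promptable path, ordered as visited, so that feasibility of that path gives $p_i\in I(\calJ_0\cup\{p_1,\dots,p_{i-1}\})$ for every $i$ and $j^*\in I(\calJ_0\cup P^*)$. I will use three facts, each immediate from the stated assumptions: (i) $I$ is monotone, $\calJ_1\subseteq\calJ_2\Rightarrow I(\calJ_1)\subseteq I(\calJ_2)$, which follows from $I(\calJ_1\cup\calJ_2)\supseteq I(\calJ_1)\cup I(\calJ_2)$; (ii) the accuracy gains are order-independent and additive, so after visiting any set $V\not\ni j^*$ the predicted reward at $j^*$ equals $\widetilde{s}^0_{k,j^*}\widetilde{a}^0_{k,j^*}+\sum_{j\in V}w_{j,j^*}$; and (iii) once this predicted reward reaches $r$, $j^*$ is incentivizable regardless of which points have been visited, which is exactly the consequence of $I(\calJ_1\cup\calJ_2)\supseteq I(\calJ_1)\cup I(\calJ_2)$ already invoked in the MIP derivation (raising predicted belief above the threshold suffices), combined with $\phi(\cdot\,;1,\cdot,\cdot)$ being nonincreasing in the predicted reward.

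The key step is: at every time $t$ before greedy halts, its next choice satisfies $w_{g_{t+1},j^*}\ge w_{min}$. Since greedy has not halted, $j^*\notin I(\calJ_0\cup S_t)$, and by monotonicity this forces $P^*\not\subseteq S_t$, since otherwise $j^*\in I(\calJ_0\cup P^*)\subseteq I(\calJ_0\cup S_t)$. Let $p_i$ be the first point of $P^*$ not in $S_t$; then $\{p_1,\dots,p_{i-1}\}\subseteq S_t$, so feasibility together with monotonicity gives $p_i\in I(\calJ_0\cup\{p_1,\dots,p_{i-1}\})\subseteq I(\calJ_0\cup S_t)$, i.e.\ $p_i$ is a legal greedy candidate at step $t+1$. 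Greedy selects the candidate of largest $w_{\cdot,j^*}$, hence $w_{g_{t+1},j^*}\ge w_{p_i,j^*}\ge w_{min}$, the last inequality by the definition of $w_{min}$ as the minimum of $w_{p,j^*}$ over $p\in P^*$ (we implicitly take $w_{min}>0$; otherwise the claimed bound is vacuous).

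Combining, after $t$ greedy prompts the predicted reward at $j^*$ is at least $\sum_{s\le t}w_{g_s,j^*}\ge t\,w_{min}$, so once $t\ge r/w_{min}$ it is at least $r$ and, by fact (iii), $j^*$ is incentivizable; greedy then prompts $j^*$ and terminates. Hence $\textsc{Greedy}\le\lceil r/w_{min}\rceil$. In the nontrivial case $j^*$ is not already incentivizable, so $\OPT\ge 1$ (and $r\ge w_{min}$ whenever no single visit fully corrects $k$'s belief about $j^*$), which yields $\textsc{Greedy}\le\frac{r}{w_{min}}\cdot\OPT$.

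I expect the key step — that the first unvisited point of the optimal path is always a legal greedy candidate — to be the crux of the argument: it is the one place where monotonicity of $I(\cdot)$, order-independence of the belief updates (so that $I$ is well defined on sets), and the structure of a feasible path are all used simultaneously. A secondary, more bookkeeping-level difficulty is fact (iii): making precise that a predicted reward of $r$ guarantees incentivizability of $j^*$ under every reachable belief state, not just the initial one, which leans on $I(\calJ_1\cup\calJ_2)\supseteq I(\calJ_1)\cup I(\calJ_2)$ exactly as used in the MIP section; relatedly, one must be slightly careful with the ceiling and the $\OPT\ge 1$ (equivalently $r\ge w_{min}$) assumption to land precisely on the stated form $\textsc{Greedy}\le\frac{r}{w_{min}}\cdot\OPT$.
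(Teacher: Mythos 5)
Your proof is correct and rests on the same key comparison as the paper's---at any point before greedy terminates, the first unvisited vertex $p_i$ of the optimal path is itself an incentivizable, unvisited candidate (by feasibility of the optimal path plus monotonicity of $I$), so greedy's choice satisfies $w_{g_{t+1},j^*}\ge w_{p_i,j^*}\ge w_{min}$---but you package it differently. The paper decomposes the greedy path into at most $\OPT$ ``deviation segments'' from the shortest path, bounds each segment's length by $r/w_{j,j^*}$ for the shortest-path vertex $j$ at which the deviation begins, and sums. You instead observe that \emph{every} greedy step contributes at least $w_{min}$ to the predicted reward at $j^*$, which can accumulate to at most $r$ before $j^*$ becomes incentivizable, giving the global bound $\textsc{Greedy}\le\lceil r/w_{min}\rceil$ with no dependence on $\OPT$ at all; multiplying by $\OPT\ge 1$ then recovers the stated form. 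Your version is cleaner, strictly stronger whenever $\OPT>1$, and makes explicit the monotonicity step ($p_i$ is a legal candidate) that the paper uses only implicitly when asserting that every $j'$ on a deviation segment has $w_{j',j^*}>w_{j,j^*}$. Two caveats, both at the level of slop the paper itself tolerates: (a) your fact (iii)---that predicted reward $\ge r$ at $j^*$ implies incentivizability from any reachable belief state, not just the initial one, even though $\phi(j^*;1,\cdot,\cdot)$ formally depends on beliefs at other points as well---is exactly the unproved assertion the paper makes when defining $r_{j'}$ in the MIP derivation, so you are on equal footing there; (b) the final inequality $\lceil r/w_{min}\rceil\le(r/w_{min})\cdot\OPT$ genuinely needs your parenthetical $r\ge w_{min}$ (or an accounting that credits the initial belief $\widetilde{s}^0_{k,j^*}\widetilde{a}^0_{k,j^*}$ toward the threshold) to survive the case $\OPT=1$ with $r/w_{min}$ non-integral; you flag this, and it is the same order of bookkeeping imprecision present in the paper's own proof.
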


\begin{proof}
Consider the first point the greedy algorithm deviates from the shortest content path. Let $j$ be the content item chosen by the shortest path. The length of the greedy content path until it either reaches $j^*$ or rejoins the shortest content path at $j$ is at most $\frac{r}{w_{j, j^*}}$, since every content item $j'$ on the greedy content path satisfies $w_{j', j^*} > w_{j, j^*}$. The greedy content path can deviate from and rejoin the shortest content path at most $\OPT$ times, so its length is at most $$\sum_{j\in\widetilde{\calJ}}\frac{r}{w_{j, j^*}}\le \frac{r}{w_{min}}\cdot\OPT,$$ where $\widetilde{\calJ}$ denotes the set of content items on the shortest content path where the greedy content path deviates. 
\end{proof}

Lemma~\ref{lemma:incentivizable} and Thm.~\ref{thm:greedy} together provide a polynomial-time procedure for computing a prompting policy for $k$ that moves it to the optimal incentivizable content point along a content path not much longer than the shortest such path.

\section{Omitted Details and Mixed Integer Programming Formulations in Section~\ref{sec:joint}}
\label{sec:app_mips}

In this appendix section, we provide full details of the joint provider prompting MIP formulation.

\subsection{MIP Formulation}

We provide the full MIP formulation that computes the optimal joint-provider prompting policy below:

\[\arraycolsep=1.4pt\def\arraystretch{2}
\begin{array}{ll}\text{maximize }&  \frac{1}{T}\sum_{t=0}^{T-1}\sum_{k\in\calK} E_k^t \\
\,\text{subject to } 

&(1a)\; C_{k,j}^t\le \nu_{k,j}^tM\hfill \forall k\in\calK, j\in\calJ, t< T\\
&(1b)\; \Aud_{k}^t\ge C_{k,j}^{t-1}\Act_{k,j}^t\hfill\forall k\in\calK, j\in\calJ, t< T \\

&(2)\; \widetilde{s}^t_{k,j} \widetilde{a}^t_{k,j}\ge \widetilde{s}^t_{k,j'} \widetilde{a}^t_{k,j'} - (1 - \Act_{k,j}^t)M \hfill \forall k\in\calK, j\in\calJ, j'\in\calJ, t< T\\

&(3)\; \widetilde{a}^t_{k,j}\lesseqgtr \Aud_k^{t-1} \pm (1-\Act_{k,j}^{t-1})M \pm \nu_{k,j}^{t-1}M \hfill \forall k\in\calK, j\in\calJ, t< T \\

&(4)\; \widetilde{a}^t_{k,j}\lesseqgtr \widetilde{a}^{t-1}_{k,j} \pm \Act_{k,j}^{t-1}M \pm \nu_{k,j}^{t-1}M \hfill \forall k\in\calK, j\in\calJ, t< T \\

&(5)\; \widetilde{a}^t_{k,j}\lesseqgtr C_{k,j}^{t-1} \pm (1-\nu_{k,j}^{t-1})M \hfill \forall k\in\calK, j\in\calJ, t < T \\

&(6)\; E_k^t = \sum_{q\in\calQ}\pi_{q,k}^t\sum_{j\in\calJ}\Act^t_{k,j}s^\ast_{k,j}\sigma(q,j) \hfill \forall k\in\calK, t< T \\

&(7)\; \Aud_{k}^t = \sum_{q\in\calQ}\pi_{q,k}^t\sum_{j\in\calJ}\Act_{k,j}^t\sigma(q,j)\hfill\forall k\in\calK, t< T \\ 

&(8)\; \widetilde{s}^t_{k,j} = \widetilde{s}^0_{k,j}(1-\Vis_{k,j}^{t-1}) + s^\ast_{k,j}\Vis_{k,j}^{t-1}\hfill\forall k\in\calK, j\in\calJ, 1\le t < T \\ 

&(9a)\; \Vis_{k,j}^t\ge\Act_{k,j}^{t'}\hfill\forall k\in\calK, j\in\calJ, 1\le t < T, t' \le t \\

&(9b)\; \Vis_{k,j}^t\le \sum_{t' = 0}^{t}\Act_{k,j}^{t'} \hfill \forall k\in\calK, j\in\calJ, 1\le t < T \\

&(10)\; \sum_{j\in\calJ}\Act_{k,j}^t = 1\hfill\forall k\in\calK, t < T \\

&(11)\; \sum_{k\in\calK} \pi_{q, k}^t = 1\hfill\forall q\in\calQ, t < T \\

&(12)\; \sum_{j\in\calJ} \nu_{k,j}^t\le 1\hfill\forall k\in\calK, t < T \\

&(13)\; \pi_{q,k}^t \in [0, 1], \nu_{k, j}^t\in\{0,1\}, C_{k,j}^t\ge 0, \Act_{k,j}^t\in\{0,1\}, \Vis_{k,j}^t\in\{0,1\} \qquad \forall k\in\calK, j\in\calJ, q\in\calQ, t < T

\end{array}
\]

Constraints 1--5 are discussed in Section~\ref{sec:joint}. Constraint 6 is the definition of welfare, Constraint 7 is the definition of audience, Constraint 8 enforces collapsing skill belief updates, Constraints 9a and 9b enforce that $\Vis_{k,j}^t = \max\{\Act_{k,j}^{t'}: t'\le t\}$, Constraint 10 ensures that a provider is at exactly one location at any given time, Constraint 11 ensures that the matching determines a probability distribution over providers, and Constraint 12 ensures that the RS provides at most one nudge to any given provider at any given time.

The constants (generated as described in the next section and given as input to the MIP) in the above formulation are (1) affinities $\sigma(q, j)$ for all $q\in\calQ, j\in\calJ$, (2) true skills $s^\ast_{k,j}$ for all $k\in\calK, j\in\calJ$, (3) initial skill beliefs $\widetilde{s}_{k,j}^0$ for all $k\in\calK, j\in\calJ$, and (4) initial audience beliefs $\widetilde{a}_{k,j}^0$ for all $k\in\calK, j\in\calJ$.

\section{Omitted Details and Results in Section~\ref{sec:experiments}}
\label{sec:app_exps}

Here we describe details pertaining to problem instance generation used in our experiments, present and describe the full set of experiments we ran, and include some initial thoughts on a column generation (CG) approach to the multi-provider planning problem.

\subsection{Problem Generation for Synthetic Data}
 We first describe the process used to generate random synthetic problem instances. Content and user embeddings lie in $\R^d_+$ (we use $d=2$). Our $J$ content points are sampled from a $d$-dimensional Gaussian $P_\calJ = \calN(\mathbf{0}, a\mathbf{I})$, where $a$ is a scalar.

Provider $k$'s skills are stationary and are represented by points $k\in\R^d_+$ in the same space (we equate provider $k$ with its ``skill point''). A provider is sampled from a Gaussian mixture with one component per content point $j\leq J$, where the mixture coefficient for component $i$ proportional to $p_C(c_i)$. 
For any content point $j$, $k$'s true skill is $-\frac{d(k, j) - \max_{j'} -d(k, j')}{\min_{j'} -d(k, j')}$. 
To generate initial skill beliefs $\widetilde{S}^0_k$ for provider $k$, a tentative``confused skill point'' is sampled from $\calJ$ with probability proportional to its (inverse) distance to their true skill $k$. Gaussian noise is added to determine their initial ``confused skill point'' $k_c$. Provider $k$'s skill belief $\widetilde{s}^0_{k,j}$ is determined analogously to true skill, but w.r.t.\ its confused skill $k_c$. 

Users $q\in\R^d_+$ are sampled from a Gaussian mixture whose components are centered on providers in a similar fashion. User affinities for content points are inversely proportional to their distance from that point: 
$\sigma(q,j) = G - d(q, j)$, where $G$ is a global constant ensuring affinities are positive. We keep $Q$
small, so interpret each $q\in\calQ$ as a user type/cluster.
\subsection{Problem Generation for Semi-Synthetic/MovieLens Data}
Problems in this scenario were generated by drawing from a pool of embeddings generated by factorizing the MovieLens dataset. First, a subset of approximately 100000 users and 1000 movies is used to generate a binary affinity matrix by conditioning on whether the users' movie rating is more than or equal to 4. The resulting sparse matrix is factorized using weighted alternating least squares (WALS). The movies embeddings are clustered with k-means into 200 clusters. Each cluster's center becomes a possible item.

For each individual scenario, a subset of $I$ items is sampled from the cluster centers without replacement, with probabilities proportional to the size (number of movies) of each cluster. For each sampled item, the set of users with affinity greater than $0.005$ is counted to estimate the popularity of the item. To generate $U$ users, we randomly choose to attach each one of them to one of the items with probability proportional to the item popularity as computed above. We then end up with a partition $U$ into $U_I$ classes. For each item $i$, we extract the top $U_i$ (in terms of affinity) users from the dataset and their corresponding embeddings. Affinities between user-item pairs are computed based on inner product and can be between 0 and 1 due to the original binarization of the matrix. Providers are generated based on sub- or super- sampling a set of items and adding uniform Gaussian noise. Since we have no data-driven way to generate confused beliefs, we then apply the confusion process of the synthetic scenario.

\subsection{Additional Experiments}

We include the full set of experiments we ran. These include 
\begin{itemize}
    \item Synthetic instances of sizes
    \begin{itemize}
        \item $(J\in\{10, 15, 20, 25\}, K = 5, Q = 30, T = 10)$ (Table~\ref{table:Q=30}, Figure~\ref{fig:Q=30})
        \item $(J\in\{10, 20\}, K = 5, Q = 50, T = 10)$ (Table~\ref{table:Q=50}, Figure~\ref{fig:Q=50})
        \item $(J\in\{10, 20, 30, 40\}, K = 10, Q = 50, T = 5)$ (Table~\ref{table:bigworlds}, Figure~\ref{fig:bigworlds})
    \end{itemize}
    \item Semi-synthetic/MovieLens instances of sizes
    \begin{itemize}
        \item $(J\in\{10, 20, 30, 40\}, K = 10, Q = 50, T = 5)$ (Table~\ref{table:movieworlds1}, Figure~\ref{fig:movieworlds}, Figure~\ref{fig:movieworlds_utils})
        \item $(J\in\{10, 20\}, K = 10, Q = 100, T = 5)$ (Table~\ref{table:movieworlds2}, Figure~\ref{fig:movieworlds2}, Figure~\ref{fig:movieworlds_utils2}).
    \end{itemize}
\end{itemize}

We ran Gurobi with a relative MIP gap of zero (guaranteeing the optimal solution) for only the smallest set of instances with $Q = 30$. For all others, we set the relative MIP gap to 0.05\%, meaning the solution returned is only guaranteed to be within $0.05\%$ of the truly optimal solution's objective value. We did not explicitly linearize our formulation as Gurobi itself recognizes and performs such linearizations in an optimized fashion.

Fig.~\ref{fig:Q=30} plots per-period utility $E^t = \sum_{k\in\calK}E_k^t$ as a function of $t$ for each the described policies averaged over 20 random instances (each) with $K = 5$ content providers, $Q = 30$ user queries, $T = 10$ time steps, and $J\in\{10, 15, 20, 25\}$ content points. We see that for the smaller instances, namely $J = 10$ and $J = 15$, the utility of the non-prompting policy is able to eventually approach the utility of the prompting policy. But for the larger instances, namely $J = 20$ and $J = 25$, the non-prompting policy has a tougher time increasing its per-period utility over the $T$ time steps. This is further illustrated in Table~\ref{table:Q=30}, which records the quantities (averaged over the $20$ instances) $P^T$, $\widehat{P}$, and $D$. The final prompt gap $P^T$ distinctly increases with the size of the instances, and the time-averaged prompt gap $\widehat{P}$ of the larger instances ($J\in\{15, 20, 25\}$) is twice that of the smallest instance ($J = 10$). The improvement over the stationary policy is significant; ranging from $40\% - 60\%$, showing the importance of dynamically matching and prompting based on the provider dynamics. Finally, the standard deviations of all quantities are fairly large, illustrating that for some instances the utility improvement due to prompting is not too large, but for other instances it is dramatic. We also solved the compact MIP for a set of slightly larger instances with $Q = 50$ user queries. For these instances, we terminated Gurobi once a relative MIP gap of $0.05\%$ was obtained. This means that the plotted utilities are within $0.05\%$ of the truly optimal utilities (though in most cases they are in fact optimal; for only a few of the instances Gurobi was unable to close the MIP gap completely even after 8 hours). Fig.~\ref{fig:Q=50} plots per-period utility averaged over 20 random instances (each) with $K = 5$, $Q = 50$, $T = 10$, and $J\in\{10, 20\}$, and Table~\ref{table:Q=50} records $P^T$, $\widehat{P}$, and $D$. The policies displayed similar trends on these larger instances: the improvements due to prompting are more significant for the instances with $J = 20$, and for these instances the per-period utility of the non-prompting policy has a tougher time approaching that of the prompting policy over the $T$ periods. Similar trends arise when the number of providers and items are bumped up to $K = 10$ and $J\in\{10, 20, 30, 40\}$ (Fig.~\ref{fig:bigworlds} and Table~\ref{table:bigworlds}).

On the MovieLens instances, larger numbers of content items has less of an effect on the (final and time-averaged prompt gap), and the gaps are generally smaller than those of the synthetic instances. But, increasing the number of content items has a marked impact on $D$, the improvement over the stationary policy (Tables~\ref{table:movieworlds1} and~\ref{table:movieworlds2}). Furthermore, unlike the synthetic instances where the non-prompting policy appears to improve over time, the non-prompting policy doesn't exhibit any discernible improvement over the 5 time steps, accruing significant user regret. Finally, for the MovieLens instances, we generated histograms of user utility counts to compare time-averaged utilities $U_q$ of individual users between the prompting and non-prompting policy (Figures~\ref{fig:movieworlds_utils} and~\ref{fig:movieworlds_utils2}). These reflect a clear improvement due to prompting---reducing the number of low-utility users and increasing the number of high-utility users. 

\textbf{Scalability}: We generally observed that the MIP suffered from scalability issues. One explanation is the large number of ``big-M'' constraints---Constraints~\ref{eq:con1a},~\ref{eq:con2},~\ref{eq:con3},~\ref{eq:con4},~\ref{eq:con5} all contain a term $M$ that increases with $Q$ (we define it to be $Q$ times the maximum query-content affinity). This inherently makes the dual bounds in branch-and-bound significantly weaker, slowing the search dramatically. One avenue towards a scalable solution is the column generation formulation we discuss next.

\begin{table}[t]
\centering
\begin{tabular}{|c|c|c|c|c|c|c|}
\hline
$J$ &
  \begin{tabular}[c]{@{}l@{}}Avg $P^T$\end{tabular} &
  \begin{tabular}[c]{@{}l@{}}SD $P^T$ \end{tabular} &
  \begin{tabular}[c]{@{}l@{}}Avg $\widehat{P}$\end{tabular} &
  \begin{tabular}[c]{@{}l@{}}SD $\widehat{P}$\end{tabular} &
  \begin{tabular}[c]{@{}l@{}}Avg $D$\end{tabular} &
  SD $D$ \\ \hline\hline
10 & 1.7\% & 2.5\% & 5.2\%  & 4.3\% & 39.8\% & 24.9\% \\ \hline
15 & 6.6\% & 5.8\% & 10.5\% & 5.7\% & 54.1\% & 26.5\% \\ \hline
20 & 8.2\% & 5.1\% & 9.7\%  & 5.6\% & 58.7\% & 41.2\% \\ \hline
25 & 9.2\% & 5.3\% & 10.1\% & 5.7\% & 40.6\% & 25.1\% \\ \hline
\end{tabular}
\caption{(Synthetic) Final gap $P^T$, time-avg. gap $\widehat{P}$,  stationary gap $D$ (avg. over 20 instances), varying $J$ ($K=5, Q=30, T = 10)$.}
\label{table:Q=30}
\end{table}

\begin{table}[t]
\centering
\begin{tabular}{|c|c|c|c|c|c|c|}
\hline
$J$ &
  \begin{tabular}[c]{@{}l@{}}Avg $P^T$\end{tabular} &
  \begin{tabular}[c]{@{}l@{}}SD $P^T$ \end{tabular} &
  \begin{tabular}[c]{@{}l@{}}Avg $\widehat{P}$\end{tabular} &
  \begin{tabular}[c]{@{}l@{}}SD $\widehat{P}$\end{tabular} &
  \begin{tabular}[c]{@{}l@{}}Avg $D$\end{tabular} &
  SD $D$ \\ \hline\hline
10 & 5.6\% & 4.5\% & 6.2\%  & 4.0\% & 25.9\% & 13.4\% \\ \hline
20 & 7.1\% & 3.1\% & 6.9\% & 2.7\% & 20.8\% & 12.3\% \\ \hline
30 & 8.8\% & 3.6\% & 8.2\%  & 3.4\% & 27.0\% & 13.8\% \\ \hline
40 & 8.9\% & 3.7\% & 8.4\% & 3.0\% & 27.7\% & 12.6\% \\ \hline
\end{tabular}
\caption{(Synthetic) Final gap $P^T$, time-avg. gap $\widehat{P}$, stationary gap $D$ (avg. over 20 instances), varying $J$ ($K=10, Q=50, T=5)$.}
\label{table:bigworlds}
\end{table}

\begin{table}[t]
\centering
\begin{tabular}{|c|c|c|c|c|c|c|}
\hline
$J$ &
  \begin{tabular}[c]{@{}l@{}}Avg $P^T$\end{tabular} &
  \begin{tabular}[c]{@{}l@{}}SD $P^T$ \end{tabular} &
  \begin{tabular}[c]{@{}l@{}}Avg $\widehat{P}$\end{tabular} &
  \begin{tabular}[c]{@{}l@{}}SD $\widehat{P}$\end{tabular} &
  \begin{tabular}[c]{@{}l@{}}Avg $D$\end{tabular} &
  SD $D$ \\ \hline\hline
10 & 3.3\% & 2.8\% & 3.1\%  & 2.4\% & 29.9\% & 14.1\% \\ \hline
20 & 5.0\% & 3.5\% & 4.3\% & 3.0\% & 25.7\% & 14.4\% \\ \hline
30 & 4.7\% & 2.1\% & 4.1\%  & 1.9\% & 48.3\% & 42.7\% \\ \hline
40 & 4.1\% & 3.1\% & 3.7\% & 2.8\% & 51.2\% & 39.5\% \\ \hline
\end{tabular}
\caption{(MovieLens) Final gap $P^T$, time-avg. gap $\widehat{P}$, stationary gap $D$ (avg. over 20 instances), varying $J$ ($K=10, Q=50, T=5)$.}
\label{table:movieworlds1}
\end{table}

\begin{figure}[t]
\centering
\begin{subfigure}
  \centering
  \includegraphics[width=.45\linewidth]{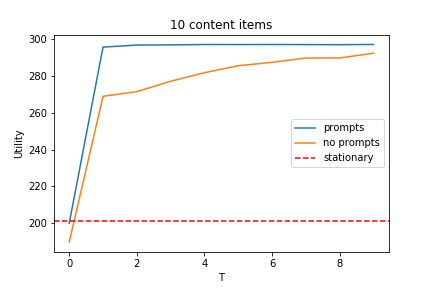}
\end{subfigure}
\begin{subfigure}
  \centering
  \includegraphics[width=0.45\linewidth]{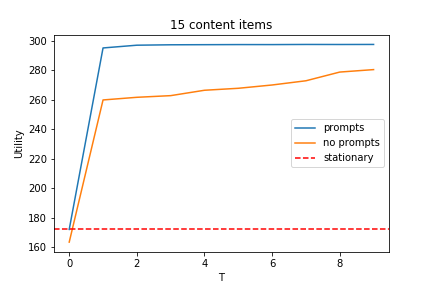}
\end{subfigure}
\begin{subfigure}
  \centering
  \includegraphics[width=.45\linewidth]{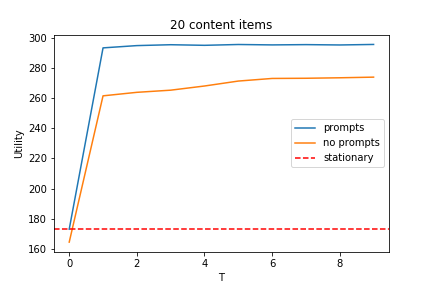}
\end{subfigure}
\begin{subfigure}
  \centering
  \includegraphics[width=.45\linewidth]{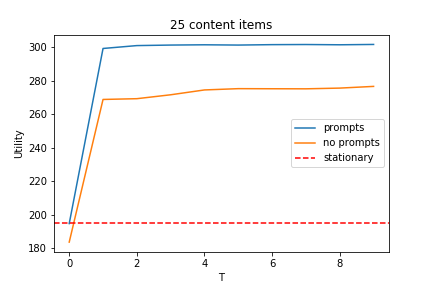}
\end{subfigure}
\caption{(Synthetic) Total per-period utility $E^t$ (avg.\ over 20 instances) for different policies, varying $J$ ($K = 5, T = 10, Q=30$).}
\label{fig:Q=30}
\end{figure}

\begin{figure}[t]
\centering
\begin{subfigure}
  \centering
  \includegraphics[width=.45\linewidth]{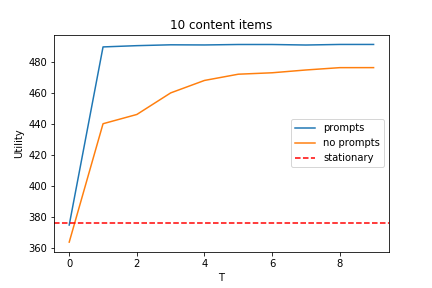}
\end{subfigure}
\begin{subfigure}
  \centering
  \includegraphics[width=0.45\linewidth]{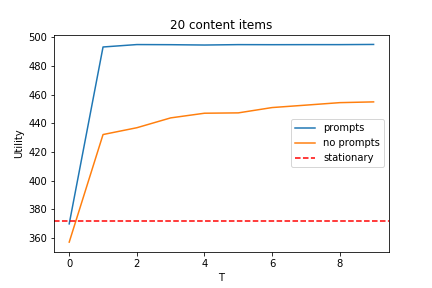}
\end{subfigure}
\caption{(Synthetic) Total per-period utility $E^t$ (avg.\ over 20 instances) for different policies, varying $J$ ($K = 5, T = 10, Q=50$).}
\label{fig:Q=50}
\end{figure}

\begin{figure}[t]
\centering
\begin{subfigure}
  \centering
  \includegraphics[width=.45\linewidth]{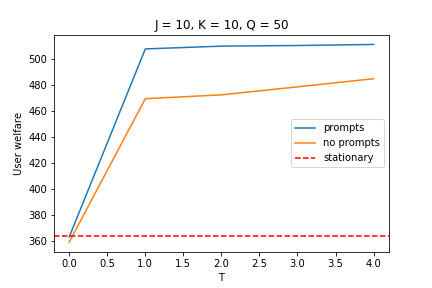}
\end{subfigure}
\begin{subfigure}
  \centering
  \includegraphics[width=0.45\linewidth]{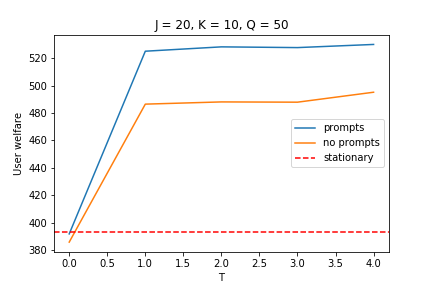}
\end{subfigure}
\begin{subfigure}
  \centering
  \includegraphics[width=.45\linewidth]{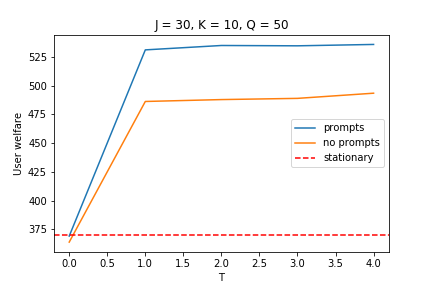}
\end{subfigure}
\begin{subfigure}
  \centering
  \includegraphics[width=.45\linewidth]{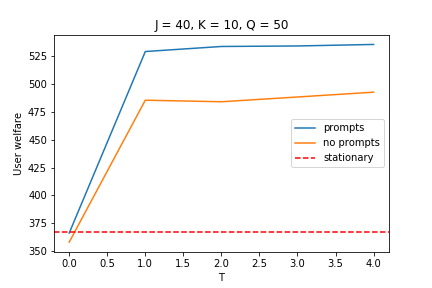}
\end{subfigure}
\caption{(Synthetic) Total per-period utility $E^t$ (avg.\ over 20 instances) for different policies, varying $J$ ($K = 10,Q = 50, T=5$).}
\label{fig:bigworlds}
\end{figure}

\begin{figure}[t]
\centering
\begin{subfigure}
  \centering
  \includegraphics[width=.45\linewidth]{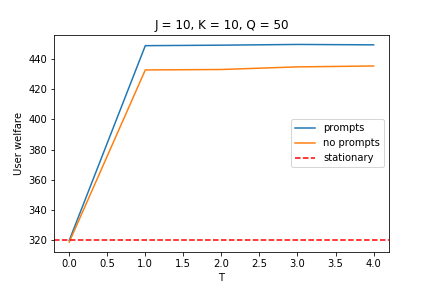}
\end{subfigure}
\begin{subfigure}
  \centering
  \includegraphics[width=0.45\linewidth]{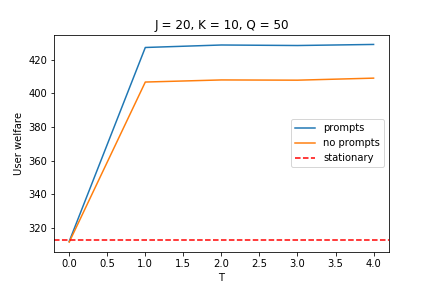}
\end{subfigure}
\begin{subfigure}
  \centering
  \includegraphics[width=.45\linewidth]{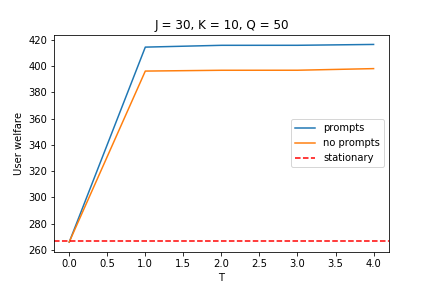}
\end{subfigure}
\begin{subfigure}
  \centering
  \includegraphics[width=.45\linewidth]{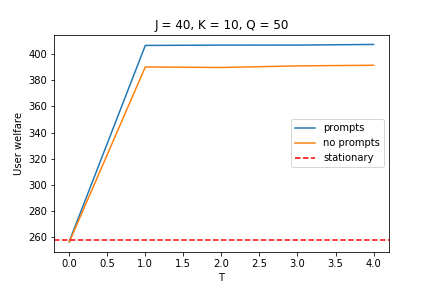}
\end{subfigure}
\caption{(MovieLens) Total per-period utility $E^t$ (avg.\ over 20 instances) for different policies, varying $J$ ($K = 10, Q = 50, T = 5$).}
\label{fig:movieworlds}
\end{figure}

\begin{figure}[t]
\centering
\begin{subfigure}
  \centering
  \includegraphics[width=.45\linewidth]{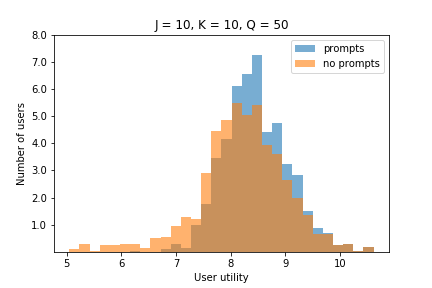}
\end{subfigure}
\begin{subfigure}
  \centering
  \includegraphics[width=0.45\linewidth]{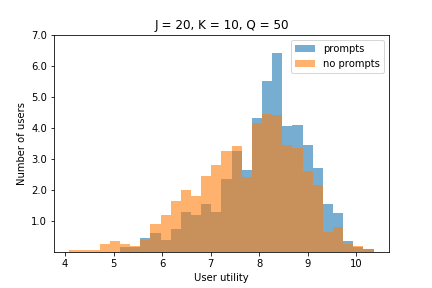}
\end{subfigure}
\begin{subfigure}
  \centering
  \includegraphics[width=.45\linewidth]{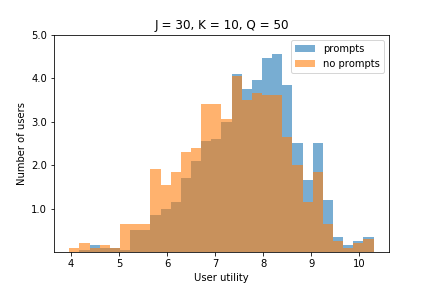}
\end{subfigure}
\begin{subfigure}
  \centering
  \includegraphics[width=.45\linewidth]{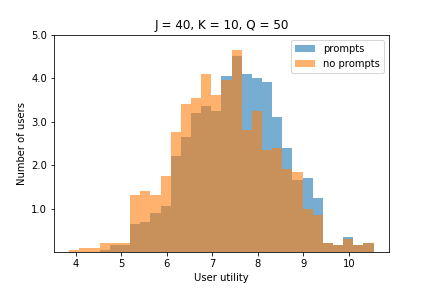}
\end{subfigure}
\caption{(MovieLens) User utility histograms (avg.\ over 20 instances), varying $J$ ($K = 10, Q = 50, T = 5$).}
\label{fig:movieworlds_utils}
\end{figure}

\begin{figure}[t]
\centering
\begin{subfigure}
  \centering
  \includegraphics[width=.45\linewidth]{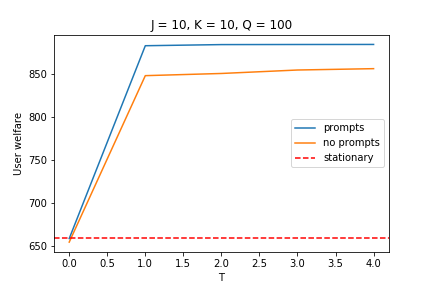}
\end{subfigure}
\begin{subfigure}
  \centering
  \includegraphics[width=0.45\linewidth]{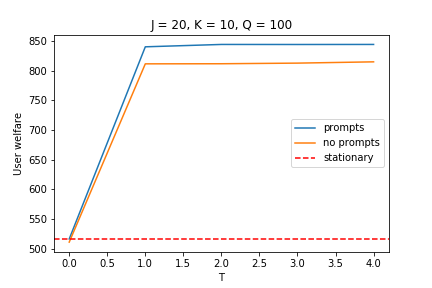}
\end{subfigure}
\caption{(MovieLens) Total per-period utility $E^t$ (avg.\ over 10 instances) for different policies, varying $J$ ($K = 10, Q = 100, T=5$).}
\label{fig:movieworlds2}
\end{figure}

\begin{figure}[t]
\centering
\begin{subfigure}
  \centering
  \includegraphics[width=.45\linewidth]{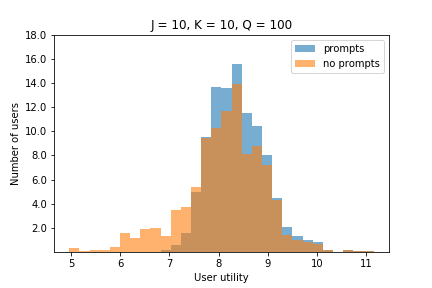}
\end{subfigure}
\begin{subfigure}
  \centering
  \includegraphics[width=0.45\linewidth]{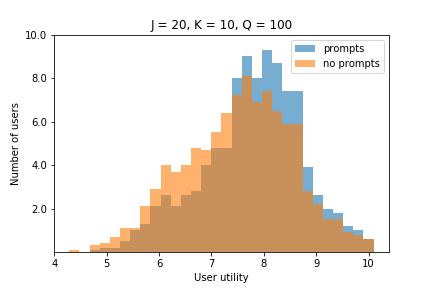}
\end{subfigure}
\caption{(MovieLens) User utility histograms (avg.\ over 10 instances), varying $J$ ($K = 10, Q = 100, T = 5$).}
\label{fig:movieworlds_utils2}
\end{figure}

\subsection{Column Generation Formulation}
\label{sec:app_cg}
Our column generation (CG) formulation differs considerably from the first (which we refer to as the {\em compact} formulation). Instead of articulating a (relatively) small integer program, it provides a large linear program whose solutions are likely to be integer points.  The idea is to enumerate (somewhat implicitly) the possible content location paths a provider can take. More precisely, for each provider and stage, we consider explicitly the audience (subset of users) matched to the provider at that stage, as well as any audience commitment made by an RS prompt. Given any \emph{entire sequence} of such matchings/prompts, assuming only that provider belief updates are deterministic, the location path of the provider is then fixed. This extended model enumerates all such sequences, requiring a doubly exponential number of variables, in contrast to the compact MIP formulation.
For this reason we develop a CG formulation. With CG, our second model has a number of advantages: the MIP is ``naturally linear,'' without the need to linearize quadratic terms; it also contains fewer, less complicated constraints; finally, it supports arbitrarily complex belief dynamics (requiring only deterministic update).

Let a \emph{star} be a subset $Q\subseteq \calQ$ of users, representing a possible audience that can be matched to a provider at a given stage. A \emph{multi-star} is a sequence of stars $\Bar{Q} = (Q_1, \ldots, Q_T)$ of length $T$ (one per stage). There are $2^{|\calQ|^T}$ multi-stars. A \emph{promise} $C \in \mathbb{R}^{J}_{\geq 0}$ is a non-negative vector with at most one non-zero coordinate (bounded by $|\calQ|$), reflecting the audience utility promised to a provider via some prompt, with a \emph{multi-promise} $\Bar{C} = (C_1, \ldots, C_T)$ defined in the obvious way. A \emph{treatment} $\tau = (\Bar{Q}, \Bar{C})$ reflects the impact of an RS matching/prompting policy on a provider; $\tau$ is \emph{honest} if the realised audiences at each stage meet (or exceed) the promises. Let $\calT$ be set of honest treatments.

Key to our MIP is the use of binary variables for each provider-treatment pair $\mu_{k,\tau}$, $k\in\calK, \tau \in\calT$ dictating the treatment (audiences and prompted commitments) associated with each provider. We omit full details,
but describe the main components and concepts underlying the formulation. First, we assume deterministic skill and audience belief updates (e.g., as described above). This means that, given the initial state $S^0$, and the treatment $\tau$ assigned to provider $k$, we can \emph{precompute} the exact utility $E_{k,\tau}$ across all $T$ stages, i.e., these are constants. With standard constraints (e.g., one treatment per provider; no user matched to more than one provider per stage, i.e., no ``star overlap''), we optimize total user utility: $\sum_{k\in\calK, \tau\in\calT} E_{k,\tau}\mu_{k,\tau}$. Notice that this approach makes very few assumptions about provider belief dynamics or best response behavior---it is all encoded in the computation of objective coefficients $E_{k,\tau}$.\footnote{Suppose that we allowed content points to fall anywhere in $\R^d$. Once a
star (or audience $Q\subseteq\calQ$) is fixed at stage $t$, the fact that $k$'s best response is to maximize utility dictates a unique content point suitable for that audience. Given the finite number of user subsets, this justifies the assumption of a finite (though possible large) set of viable content points. The size of this set, fortunately, plays no role when we use CG.}

The price one pays in this MIP is the explosion in the number of decision variables, which enumerates \emph{all treatments} (per provider). Of course, only a small number of these will be active at the optimal solution (or in \emph{any} feasible solution), so we can solve the problem using 
\emph{column generation (CG)} \cite{desrosiers-colgen:or2005}. In CG, we solve the problem using only a subset of these variables (or columns in the LP relaxation of the MIP), solve a \emph{subproblem} to determine the missing variable (column) whose addition offers the greatest marginal increase in objective value, then add that variable/column and re-solve. This process is repeated until no improving values are found or some other termination criterion is met (e.g., reaching a max number of iterations, or minimum improvement threshold). The reduced LP has one constraint for every provider $k$ and one for every user-stage pair $(q,t)$. The CG subproblem requires finding the variable with maximum \emph{reduced cost}: given the values of the dual variables (one per constraint, $\lambda_k$ for $k\in K$ and $\lambda_{qt}$ for $(q,t) \in \calQ \times [1, \ldots, T]$) at the optimal solution of the current reduced LP, we find the provider/treatment pair 
$$(k^\ast,\tau^\ast) = \argmax_{k \in K, \tau \in\calT} E_{k,\tau}  - \lambda_k - \sum_t \sum_{q \in Q_t^\tau}\lambda_{qt}$$ and add the corresponding variable $\mu_{k^\ast,\tau^\ast}$ to the LP. (Here $Q_t^\tau$ is the $t$th star in treatment $\tau$.)


Another advantage of our star/treatment formulation is evident: the dual decomposition reduces the subproblem to a collection of \emph{single-provider} location-path planning problems. We find the max reduced-cost improving treatment for each provider, then add the treatment for the maximizing provider.


\end{document}